\tikzset{%
	>={Latex[width=2mm,length=2mm]},
	module/.style = {rectangle, draw, minimum height=0.8cm, 
		minimum width=2.5cm, fill=orange!15, text centered, 
		font=\ttfamily},
	store/.style = {circle, draw, minimum height=1cm, 
		fill=orange!20, text centered, font=\ttfamily},
	curr/.style = {circle, draw, minimum height=1cm, 
		fill=red!30, text centered, font=\ttfamily},
	triplearrow/.style={
		draw=black!75,
		color=black!75,
		double distance=3pt,
		postaction={draw=black!75, color=black!75}, 
		->},
}
\newcommand{\CS}{\textnormal{CS}}
\newcommand{\SCS}{\textnormal{SCS}}
\newcommand{\SCSE}{\textnormal{SCSE}}
\newcommand{\CCP}{\textnormal{CCP}}
\newcommand{\SCCP}{\textnormal{SCCP}}
\newcommand{\SCCPwE}{\textnormal{SCCP\texttt{+}E}}
\newcommand{\EQ}{=}
\newcommand{\IFS}{\mathbf{if}}
\newcommand{\REW}{\rightarrow}
\newcommand{\eq}[2]{#1 \EQ #2}
\newcommand{\crl}[3]{#1 \REW #2 \; \IFS\; #3}
\newcommand{\cond}{\gamma}
\newcommand{\ccrl}[3]{#1 \REW #2\;\textnormal{\bf if}\; {#3}}
\newcommand{\can}[2]{{#1}\!\downarrow_{#2}}
\newcommand{\ecal}{\mathcal{E}}
\newcommand{\lcal}{\mathcal{L}}
\newcommand{\rcal}{\mathcal{R}}
\newcommand{\tcal}{\mathcal{T}}
\newcommand{\ls}[1]{\mathit{ls}(#1)}
\newcommand{\ded}{\vdash}
\newcommand{\sort}[1]{\textit{#1}}
\newcommand{\states}{\sort{State}}
\newcommand{\rews}{\rightarrow}
\newcommand{\func}[3]{#1 : #2 \longrightarrow #3}
\newcommand{\lang}{{0}}
\newcommand{\nlang}{{1}}
\newcommand{\oqff}[2]{\textit{\textit{QF}}_{#1}({#2})}
\newcommand{\MODELS}{\vDash}
\newcommand{\RMODELS}{\Dashv}
\DeclarePairedDelimiter{\sfunc}{\textbf{\textup{[}}}{\textbf{\textup{]}}}					
\DeclarePairedDelimiter{\efunc}{\uparrow}{}					
\newcommand{\true}{\it{true}}								
\newcommand{\false}{\it{false}}								
\newcommand{\Con}{\it{Con}}									
\newcommand{\join}[2]{#1 \ \sqcup \ #2}						
\newcommand{\cleq}[2]{#1 \sqsubseteq #2}					
\newcommand{\cgeq}[2]{#1 \sqsupseteq #2}					
\newcommand{\sfuncs}{\sfunc{\cdot}_1,\ldots,\sfunc{\cdot}_n}	
\newcommand{\efuncs}{\efunc{_1},\ldots,\efunc{_n}}				
\newcommand{\defsymbol}{\stackrel{\textup{\texttt{def}}}  {=}}	
\newcommand{\entails}{\vdash}
\newcommand{\conf}[2]{\langle #1; #2\rangle}
\newcommand{\redi}{\longrightarrow }
\newcommand{\extr}[2]{#2\mathop{\uparrow_#1}}
\newcommand{\rTell}{\sf {Tell}}
\newcommand{\rAsk}{\sf {Ask}}
\newcommand{\rPar}{\sf {Par}}
\newcommand{\rSp}{\sf {SP}}
\newcommand{\rRec}{\sf {Rec}}
\newcommand{\rExt}{\sf {Ext}}
\newcommand{\mth}[1]{\textnormal{\textit{#1}}}
\newcommand{\cde}[1]{\textnormal{\texttt{#1}}}
\newcommand{\rlname}[1]{\cde{[#1]}}
\lstdefinelanguage{Maude}{%
   keywords={
    , mod, fmod, endm, endfm
    , pr , protecting 
    , ex , extending 
    , inc, including
    , sort, sorts, subsort, subsorts
    , var, vars
    , op, ops
    , eq, ceq
    , rl, crl
    , if
    , search
    , red, reduce
    }
}
\lstdefinelanguage{Sccp}{%
}
\newcommand{\tellp}[1]{\tell(#1)}
\newcommand{\askp}[2]{\ask \  #1 \  \rightarrow \ #2}
\renewcommand{\iff}{\mbox{\ \ iff \ \ }}
\newcommand{\C}{\mathcal{C}}
\newcommand{\ask}{{\bf ask}}
\newcommand{\tell}{{\bf tell}}
\newcommand{\Stop}{{\bf 0}}
\newcommand{\rrarrow}{\longrightarrow}
\newcommand{\pairccp}[2]{\langle #1,#2 \rangle}
\long\def\comment#1{}
\newcommand{\To}{\Rightarrow}
\newcommand{\K}[2]{\left[ #2 \right] _{#1}}			
\newcommand{\prj}[2]{#1^{{#2}}}
\newdimen\w@dth
\def\setw@dth#1#2{\setbox\z@\hbox{\scriptsize $#1$}\w@dth=\wd\z@
\setbox\@ne\hbox{\scriptsize $#2$}\ifnum\w@dth<\wd\@ne \w@dth=\wd\@ne \fi
\advance\w@dth by 1.2em}
\def\t@^#1_#2{\allowbreak\def\n@one{#1}\def\n@two{#2}\mathrel
{\setw@dth{#1}{#2}
\mathop{\hbox to \w@dth{\rightarrowfill}}\limits
\ifx\n@one\empty\else ^{\box\z@}\fi
\ifx\n@two\empty\else _{\box\@ne}\fi}}
\def\t@@^#1{\@ifnextchar_ {\t@^{#1}}{\t@^{#1}_{}}}
\def\t@left^#1_#2{\def\n@one{#1}\def\n@two{#2}\mathrel{\setw@dth{#1}{#2}
\mathop{\hbox to \w@dth{\leftarrowfill}}\limits
\ifx\n@one\empty\else ^{\box\z@}\fi
\ifx\n@two\empty\else _{\box\@ne}\fi}}
\def\t@@left^#1{\@ifnextchar_ {\t@left^{#1}}{\t@left^{#1}_{}}}
\def\two@^#1_#2{\def\n@one{#1}\def\n@two{#2}\mathrel{\setw@dth{#1}{#2}
\mathop{\vcenter{\hbox to \w@dth{\rightarrowfill}\kern-1.7ex
                 \hbox to \w@dth{\rightarrowfill}}%
       }\limits
\ifx\n@one\empty\else ^{\box\z@}\fi
\ifx\n@two\empty\else _{\box\@ne}\fi}}
\def\tw@@^#1{\@ifnextchar_ {\two@^{#1}}{\two@^{#1}_{}}}
\def\tofr@^#1_#2{\def\n@one{#1}\def\n@two{#2}\mathrel{\setw@dth{#1}{#2}
\mathop{\vcenter{\hbox to \w@dth{\rightarrowfill}\kern-1.7ex
                 \hbox to \w@dth{\leftarrowfill}}%
       }\limits
\ifx\n@one\empty\else ^{\box\z@}\fi
\ifx\n@two\empty\else _{\box\@ne}\fi}}
\def\t@fr@^#1{\@ifnextchar_ {\tofr@^{#1}}{\tofr@^{#1}_{}}}
\newdimen\W@dth
\def\setW@dth#1#2{\setbox\z@\hbox{$#1$}\W@dth=\wd\z@
\setbox\@ne\hbox{$#2$}\ifnum\W@dth<\wd\@ne \W@dth=\wd\@ne \fi
\advance\W@dth by 1.2em}
\def\T@^#1_#2{\allowbreak\def\N@one{#1}\def\N@two{#2}\mathrel
{\setW@dth{#1}{#2}
\mathop{\hbox to \W@dth{\rightarrowfill}}\limits
\ifx\N@one\empty\else ^{\box\z@}\fi
\ifx\N@two\empty\else _{\box\@ne}\fi}}
\def\T@@^#1{\@ifnextchar_ {\T@^{#1}}{\T@^{#1}_{}}}
\def\T@left^#1_#2{\def\N@one{#1}\def\N@two{#2}\mathrel{\setW@dth{#1}{#2}
\mathop{\hbox to \W@dth{\leftarrowfill}}\limits
\ifx\N@one\empty\else ^{\box\z@}\fi
\ifx\N@two\empty\else _{\box\@ne}\fi}}
\def\T@@left^#1{\@ifnextchar_ {\T@left^{#1}}{\T@left^{#1}_{}}}
\def\Tofr@^#1_#2{\def\N@one{#1}\def\N@two{#2}\mathrel{\setW@dth{#1}{#2}
\mathop{\vcenter{\hbox to \W@dth{\rightarrowfill}\kern-1.7ex
                 \hbox to \W@dth{\leftarrowfill}}%
       }\limits
\ifx\N@one\empty\else ^{\box\z@}\fi
\ifx\N@two\empty\else _{\box\@ne}\fi}}
\def\T@fr@^#1{\@ifnextchar_ {\Tofr@^{#1}}{\Tofr@^{#1}_{}}}
\def\Two@^#1_#2{\def\N@one{#1}\def\N@two{#2}\mathrel{\setW@dth{#1}{#2}
\mathop{\vcenter{\hbox to \W@dth{\rightarrowfill}\kern-1.7ex
                 \hbox to \W@dth{\rightarrowfill}}%
       }\limits
\ifx\N@one\empty\else ^{\box\z@}\fi
\ifx\N@two\empty\else _{\box\@ne}\fi}}
\def\Tw@@^#1{\@ifnextchar_ {\Two@^{#1}}{\Two@^{#1}_{}}}
\def\to{\@ifnextchar^ {\t@@}{\t@@^{}}}
\def\from{\@ifnextchar^ {\t@@left}{\t@@left^{}}}
\def\two{\@ifnextchar^ {\tw@@}{\tw@@^{}}}
\def\tofro{\@ifnextchar^ {\t@fr@}{\t@fr@^{}}}
\def\To{\@ifnextchar^ {\T@@}{\T@@^{}}}
\def\From{\@ifnextchar^ {\T@@left}{\T@@left^{}}}
\def\Two{\@ifnextchar^ {\Tw@@}{\Tw@@^{}}}
\def\Tofro{\@ifnextchar^ {\T@fr@}{\T@fr@^{}}}
\title{ Reachability Analysis for Spatial Concurrent Constraint
  Systems with Extrusion }
\author{
  Miguel Romero \and
  Camilo Rocha
}
\institute{
  Department of Electronics and Computer Science \\
  Pontificia Universidad Javeriana \\
  Cali, Colombia \\
  \email{\{miguel.romero,camilo.rocha\}@javerianacali.edu.co}
}
\begin{document}

\maketitle

\begin{abstract}
  Spatial concurrent constraint programming ($\SCCP$) is an algebraic
  model of spatial modalities in constrained-based process calculi; it
  can be used to reason about spatial information distributed among
  the agents of a system. This work presents an executable rewriting
  logic semantics of $\SCCP$ with extrusion (i.e., process mobility)
  that uses rewriting modulo SMT, a novel technique that combines the
  power of term rewriting, matching algorithms, and SMT-solving. In
  this setting, constraints are encoded as formulas in a theory with a
  satisfaction relation decided by an SMT solver, while the topology
  of the spatial hierarchy is encoded as part of the term structure of
  symbolic states. By being executable, the rewriting logic
  specification offers support for the inherent symbolic and
  challenging task of reachability analysis in the constrained-based
  model. The approach is illustrated with examples about the automatic
  verification of fault-tolerance, consistency, and privacy in
  distributed spatial and hierarchical systems.
\end{abstract}
\section{Introduction}
\label{sec.intro}

The widespread availability of virtualization resources such as
container and virtual machine technology are marking a new incarnation
of distributed systems. Tasks such as fault-tolerant infrastructure
monitoring and delivery of goods in unmanned aerial systems,
preventing privacy breaches in social networks and cloud storage by
managing information access, and saving lives by controlling and
monitoring pace makers are now taking place in the presence of spatial
hierarchies. This means that the usual high degree of safety criteria
in such systems is now exposed to the presence of hierarchical
computation (and sharing) of information among groups of distributed
and concurrent agents, which makes it an even more challenging goal
for formal modeling and verification purposes.

An interesting step towards mathematically understanding and formally
modeling these highly distributed hierarchical systems has been taken
by S. Knight et al.~\cite{knight-sccp-2012} and M. Guzmán et
al.~\cite{guzman-sccpe-2017}. They introduce an algebraic model of
spatial modalities in constrained-based process calculi where
information (e.g., knowledge) can be shared in spatially distributed
agents that interact with the global system by launching processes
(e.g., programs). The scope of an agent is given by a spatial operator
indicating where a process resides within the space structure, where
it queries and posts information in the local store. In the end, their
proposal offers an algebraic framework to model and reason about
important concepts of safety-critical systems such as fault-tolerance,
consistency, and privacy within the setting of distributed
hierarchical spaces and process extrusion (i.e., mobility).

This work addresses the key issue of automatically verifying
reachability properties of distributed hierarchical systems based on
the algebraic model of spatially constrained concurrent process with
extrusion in~\cite{knight-sccp-2012,guzman-sccpe-2017}. The approach
is based on the formal specification of such a model as a theory in
rewriting logic~\cite{meseguer-rltcs-1992}, a semantic framework
unifying a wide range of models of concurrency. The formal
specification is executable in Maude~\cite{clavel-maudebook-2007},
thus benefiting from formal analysis techniques and tools such as
state-space exploration and automata-based LTL model checking. Safety
criteria such as fault-tolerance (e.g., when and how does a local
store first become inconsistent?), consistency (e.g., does a fault
propagate to the global system?, do two stores have the same
information?), and privacy (e.g., does a store ever gain enough
information as to reveal private information?) can now be
automatically queried in these systems.

The rewriting logic theory introduced in this work supports the
constructs of constrained-based process calculi presented in
in~\cite{knight-sccp-2012,guzman-sccpe-2017} such as posting and
querying information from/to a local store, parallel composition of
processes, recursion, and extrusion. The underlying constraint system
is materialized with the help of SMT-solving technology. In
particular, the constraints are quantifier-free formulas interpreted
over the Booleans and integers, and the information entailment in the
algebraic model is realized via semantic inference. By following the
rewriting modulo SMT approach~\cite{rocha-rewsmtjlamp-2017},
simulation of the rewrite relation induced by the rewriting logic
theory can be performed efficiently using matching and SMT-solving.
In this setting, existential reachability queries can be automatically
and efficiently performed using existing rewrite-based facilities
available from Maude.

Continuing with the effort of putting epistemic concepts in the hands
of programmers initiated in~\cite{knight-sccp-2012}, a programming
language is introduced in this work. This language provides
programmers with the building blocks of a language based on the
algebraic model of spatial modalities in constrained-based process
calculi, with the executable semantics given by rewrite theory
above-mentioned. Such a language is accompanied with an open-source
graphical environment in which code can be edited and executed.

\paragraph{Outline.}
This work is organized as follows. Section~\ref{sec.prelim} presents
some preliminaries on concurrent constraint programming, rewriting
logic, and SMT-solving. Section~\ref{sec.sccp} overviews spatial
concurrent constraint systems with extrusion. Section~\ref{sec.rew}
introduces the rewriting logic semantics of the algebraic model and
Section~\ref{sec.rew} presents the mechanical proofs obtained for
ensuring the correctness of the operational semantics of the rewrite
theory w.r.t. to its mathematical one. Section~\ref{sec.reach}
explains how existential reachability properties can be automatically
proved and examples illustrating such a feature. The language and tool
based on the algebraic model are explained in Section~\ref{sec.lang}.
Finally, Section~\ref{sec.concl} concludes the work.

\section{Preliminaries}
\label{sec.prelim}

\subsection{Concurrent Constraint Programming and Constraint Systems}

{\it Concurrent Constraint Programming}
($\CCP$)~\cite{saraswat-ccp-1990,saraswat-ccpbook-1993,saraswat-ccpsem-1991}
(see a survey in \cite{olarte-emergmodels-2013}) is a model for
concurrency that combines the traditional operational view of process
calculi with a {\it declarative} view based on logic. This allows
$\CCP$ benefit from the large set of reasoning techniques of both
process calculi and logic. Under this paradigm, the conception of
\emph{store as valuation} in the von Neumann model is replaced by the
notion of \emph{store as constraint} and processes are seen as
information transducers.

The $\CCP$ model of computation makes use of \emph{ask} and
\emph{tell} operations instead of the classical read and write. An ask
operation tests if a given piece of information (i.e., a constraint as
in $temperature > 23$) can be deduced from the store.  The tell
operations post constraints in the store, thus augmenting/refining the
information in it.  A fundamental issue in $\CCP$ is then the
specification of systems by means of constraints that represent
partial information about certain variables. The state of the system
is specified by the store (i.e., a constraint) that is monotonically
refined by processes adding new information.
  
The basic constructs (processes) in \emph{$\CCP$} are: (1) the ${\it
  tell}(c)$ agent, which posts the constraint $c$ to the store, making
it available to the other processes. Once a constraint is added, it
cannot be removed from the store (i.e., the store grows
monotonically). And (2), the ask process $c\rightarrow P$, which
queries if $c$ can be deduced from the information in the current
store; if so, the agent behaves like $P$, otherwise, it remains
blocked until more information is added to the store. In this way, ask
processes define a reactive synchronization mechanism based on
entailment of constraints. A basic $\CCP$ process language usually
adds \emph{parallel composition} ($P\parallel Q$) combining processes
concurrently, a \emph{hiding} operator for local variable definition,
and potential infinite computation by means of recursion or
replication.

The $\CCP$ model is parametric in a \emph{constraint system} ($\CS$)
specifying the structure and interdependencies of the partial
information that processes can query (\emph{ask}) and post
(\emph{tell}) in the \emph{shared store}. The notion of constraint
system can be given by using first-order logic. Given a signature
$\Sigma$ and a first-order theory $\Delta$ over $\Sigma$, constraints
can be thought of as first-order formulae over $\Sigma$. The (binary)
entailment relation $\vdash$ over constraints is defined for any pair
of constraints $c$ and $d$ by $c\entails d$ iff the implication $c
\Rightarrow d$ is valid in $\Delta$. As an example, take the finite
domain constraint system (FD)~\cite{vanhentenryck-ccfd-1998} where
variables are assumed to range over finite domains and, in addition to
equality, it is possible to have predicates (e.g., ``$\leq$'') that
restrict the values of a variable to some finite set.

An algebraic representation of $\CS$ is used in the present work.

\begin{definition}[Constraint Systems]
	\label{def:cs}
	A constraint system $(\CS)$ $\mathbf{C}$ is a complete algebraic
  lattice $(\Con, \sqsubseteq)$. The elements of $\Con$ are called
  \emph{constraints}. The symbols $\sqcup$, $\true$, and $\false$ are
  used to denote the least upper bound $($lub$)$ operation, the
  bottom, and the top element of $\mathbf{C}$, respectively.
\end{definition}

\noindent In Definition~\ref{def:cs}, a $\CS$ is characterized as a
\emph{complete algebraic lattice}. The elements of the lattice, the
\emph{constraints}, represent (partial) information. A constraint $c$
can be viewed as an \emph{assertion} (or a \emph{proposition}). The
lattice order $\sqsubseteq$ is meant to capture entailment of
information: $\cleq{d}{c}$, alternatively written $\cgeq{c}{d}$, means
that the assertion $c$ represents as much information as $d$.  Thus
$\cleq{d}{c}$ may be interpreted as saying that $c\entails d$ or that
$d$ can be \emph{derived} from $c$. The \emph{least upper bound
$($lub$)$} operator $\sqcup$ represents join of information and thus
$\join{c}{d}$ is the least element in the underlying lattice above $c$
and $d$, asserting that both $c$ and $d$ hold. The top element
represents the lub of all, possibly inconsistent, information, hence
it is referred to as $\false$. The bottom element $\true$ represents
the empty information.

\subsection{Order-sorted Rewriting Logic in a Nutshell}

This section presents an overview of rewriting logic and how it can be
used to obtain a rewriting logic semantics of a language. The reader
is referred to~\cite{meseguer-rltcs-1992,clavel-maudebook-2007}
and~\cite{meseguer-rlsproject-2013,rusu-rewt-2016}, respectively, for 
an in-depth treatment of these topics.

Rewriting logic~\cite{meseguer-rltcs-1992} is a general semantic framework
that unifies a wide range of models of concurrency. Language
specifications can be executed in Maude~\cite{clavel-maudebook-2007},
a high-performance rewriting logic implementation and benefit from a
wide set of formal analysis tools available to it, such as an LTL
model checker and an inductive theorem prover.

\subsubsection{Rewriting Logic.}
A {\em rewriting logic specification} or {\em rewrite theory} is a tuple
$\rcal = (\Sigma, E \uplus B, R)$ where:
\begin{itemize}
\item $(\Sigma, E\uplus B)$ is an order-sorted equational theory with
  $\Sigma = (S,\leq,F)$ a signature with finite poset of sorts
  $(S,\leq)$ and a set of function symbols $F$ typed with sorts in
  $S$; $E$ is a set of $\Sigma$-equations, which are universally
  quantified Horn clauses with atoms that are $\Sigma$-equations $t=u$
  with $t,u$ terms of the same sort; $B$ is a set of structural axioms
  (e.g., associativity, commutativity, identity) such that there
  exists a matching algorithm modulo $B$ producing a finite number of
  $B$-matching substitutions or failing otherwise; and
\item $R$ a set of universally quantified conditional rewrite rules
  of the form 
  \[\crl{t}{u}{\bigwedge_i \phi_i}\]
  where $t,u$ are $\Sigma$-terms of the same sort and each $\phi_i$ is
  a $\Sigma$-equality.
\end{itemize}
Given $X = \{X_s\}_{s\in S}$, an $S$-indexed family of disjoint
variable sets with each $X_s$ countably infinite, the {\em set of
  terms of sort $s$} and the {\em set of ground terms of sort $s$} are
denoted, respectively, by $T_\Sigma(X)_s$ and $T_{\Sigma,s}$;
similarly, $T_\Sigma(X)$ and $T_\Sigma$ denote, respectively, the set
of terms and the set of ground terms. The expressions
$\tcal_\Sigma(X)$ and $\tcal_\Sigma$ denote the corresponding
order-sorted $\Sigma$-term algebras.  All order-sorted signatures are
assumed {\em preregular}~\cite{goguen-ordsortalg1-1992}, i.e., each
$\Sigma$-term $t$ has a unique {\em least sort} $\ls{t} \in S$ s.t. $t
\in T_\Sigma(X)_{\ls{t}}$. It is also assumed that $\Sigma$ has {\em
  nonempty sorts}, i.e., $T_{\Sigma,s}\neq \emptyset$ for each $s\in
S$. Many-sorted equational logic is the special case of order-sorted
equational logic when the subsort relation $\leq$ is restricted to be
the identity relation over the sorts.

An equational theory $\ecal = (\Sigma,E)$ induces the congruence
relation $=_\ecal$ on $T_\Sigma(X)$ (or simply $=_E$) defined for $t,u
\in T_\Sigma(X)$ by $t =_\ecal u$ if and only if $\ecal \ded
\eq{t}{u}$, where $\ecal \ded \eq{t}{u}$ denotes $\ecal$-provability
by the deduction rules for order-sorted equational logic
in~\cite{meseguer-membership-1998}. For the purpose of this paper, such 
inference rules, which are analogous to those of many-sorted equational 
logic, are even simpler thanks to the assumption that $\Sigma$ has nonempty
sorts, which makes unnecessary the explicit treatment of universal
quantifiers. The expressions $\tcal_{\ecal}(X)$ and $\tcal_\ecal$
(also written $\tcal_{\Sigma/E}(X)$ and $\tcal_{\Sigma/E}$) denote the
quotient algebras induced by $=_\ecal$ on the term algebras
$\tcal_\Sigma(X)$ and $\tcal_\Sigma$, respectively;
$\tcal_{\Sigma/E}$ is called the {\em initial algebra} of
$(\Sigma,E)$.

A \textit{(topmost) rewrite theory} is a tuple $\rcal = (\Sigma,E,R)$
with top sort $\states$, i.e., no operator in $\Sigma$ has $\states$
as argument sort and each rule $\ccrl{l}{r}{\phi}\in R$ satisfies $l,r
\in T_\Sigma(X)_\states$ and $l \notin X$.  A rewrite theory $\rcal$
induces a rewrite relation $\rews_{\rcal}$ on $T_{\Sigma}(X)$ defined
for every $t,u \in T_\Sigma(X)$ by $t \rews_\rcal u$ if and only if
there is a rule $(\ccrl{l}{r}{\phi}) \in R$ and a substitution
$\func{\theta}{X}{T_\Sigma(X)}$ satisfying $t =_E l\theta$, $u =_E
r\theta$, and $E \vdash \phi\theta$. The tuple $\tcal_\rcal =
(\tcal_{\Sigma/E},\rews_\rcal^*)$ is called the {\em initial
  reachability model of $\rcal$}~\cite{bruni-semantics-2006}.

\subsubsection{Admissible Rewrite Theories.}
Appropriate requirements are needed to make an equational theory
$\ecal$ {\em admissible}, i.e., {\em executable} in rewriting
languages such as Maude~\cite{clavel-maudebook-2007}.  In this paper, it is
assumed that the equations $E$ can be oriented into a set of (possibly
conditional) {sort-decreasing}, {operationally terminating}, and
{confluent} rewrite rules $\overrightarrow{E}$ modulo $B$. The rewrite
system $\overrightarrow{E}$ is {\em sort decreasing} modulo $B$ if and
only if for each $(\crl{t}{u}{\cond}) \in \overrightarrow{E}$ and
substitution $\theta$, $\ls{t\theta} \geq \ls{u\theta}$ if
$(\Sigma,B,\overrightarrow{E}) \ded \cond\theta$.  The system
$\overrightarrow{E}$ is {\em operationally terminating} modulo
$B$~\cite{duran-operterm-2008} if and only if there is no infinite
well-formed proof tree in $(\Sigma,B,\overrightarrow{E})$
(see~\cite{lucas-ordersorted-2009} for terminology and details).  
Furthermore, $\overrightarrow{E}$ is {\em confluent} modulo $B$ if 
and only if for all $t,t_1,t_2 \in T_\Sigma(X)$, if $t \rews^*_{E/B} t_1$ 
and $t \rews^*_{E/B} t_2$, then there is $u \in T_\Sigma(X)$ such that $t_1
\rews^*_{E/B} u$ and $t_2 \rews^*_{E/B} u$.  The term $\can{t}{E/B}
\in T_\Sigma(X)$ denotes the {\em $E$-canonical form} of $t$ modulo
$B$ so that $t \rews_{E/B}^* \can{t}{E/B}$ and $\can{t}{E / B}$ cannot
be further reduced by $\rews_{E/B}$. Under sort-decreasingness,
operational termination, and confluence, the term $\can{t}{E/B}$ is
unique up to $B$-equality.

For a rewrite theory $\rcal$, the rewrite relation $\rews_\rcal$ is
undecidable in general, even if its underlying equational theory is
admissible, unless conditions such as
coherence~\cite{viry-coherence-2002} are given (i.e, whenever
rewriting with $\rews_{R/E \cup B}$ can be decomposed into rewriting
with $\rews_{E/B}$ and $\rews_{R/B}$). A key goal
of~\cite{rocha-rewsmtjlamp-2017} was to make such a relation both
decidable and symbolically executable when $\rcal$ is topmost and $E$
decomposes as $E_\lang \uplus B_\nlang$, representing a built-in
theory for which formula satisfiability is decidable and $B_\nlang$
has a matching algorithm.

\subsubsection{Rewriting Logic Semantics.}
The rewriting logic semantics of a language $\lcal$ is a rewrite
theory $\rcal_\lcal = (\Sigma_\lcal,E_\lcal \uplus B_\lcal, R_\lcal)$
where $\rews_{\rcal_\lcal}$ provides a step-by-step formal description
of $\lcal$'s {\em observable} run-to-completion mechanisms. The
conceptual distinction between equations and rules in $\rcal_\lcal$
has important consequences that are captured by rewriting logic's {\em
  abstraction dial}~\cite{meseguer-rlsproject-2013}. Setting the level of
abstraction in which all the interleaving behavior of evaluations in
$\lcal$ is observable, corresponds to the special case in which the
dial is turned down to its minimum position by having $E_\lcal \uplus
B_\lcal = \emptyset$. The abstraction dial can also be turned up to
its maximal position as the special case in which $R_\lcal =
\emptyset$, thus obtaining an equational semantics of $\lcal$ without
observable transitions. The rewriting logic semantics presented in
this paper is {\em faithful} in the sense that such an abstraction
dial is set at a position that exactly captures the interleaving
behavior of the concurrency model.

\subsubsection{Maude.}
Maude~\cite{clavel-maudebook-2007} is a language and system based on rewriting 
logic.
It supports order-sorted equational and rewrite theory specifications
in \emph{functional} and \emph{system} modules,
respectively. Admissibility of functional and system modules can be
checked with the help of the \emph{Maude Formal Environment}
(MFE)~\cite{duran-mfetalcott-2011,duran-mfecalco-2011}, an executable 
formal specification in Maude with tools to mechanically verify such
properties.  The MFE includes the Maude Termination Tool, the Maude
Sufficient Completeness Checker, the Church-Rosser Checker, the
Coherence Checker, and the Maude Inductive Theorem Prover.

\subsection{SMT-Solving}

Satisfiability Modulo Theories (SMT) studies methods for checking
satisfiability of first-order formulas in specific models. The SMT
problem is a decision problem for logical formulas with respect to
combinations of background theories expressed in classical first-order
logic with equality. An SMT instance is a formula $\phi$ (typically
quantifier free, but not necessarily) in first-order logic and a model
$\tcal$, with the goal of determining if $\phi$ is satisfiable in
$\tcal$.

In this work, the representation of the constraint system is based on
SMT solving technology. Given an many-sorted equational theory
$\ecal_\lang = (\Sigma_\lang,E_\lang)$ and a set of variables $X_\lang
\subseteq X$ over the sorts in $\Sigma_\lang$, the formulas under
consideration are in the set $\oqff{\Sigma_\lang}{X_\lang}$ of
quantifier-free $\Sigma_\lang$-formulas: each formula being a Boolean
combination of $\Sigma_\lang$-equation with variables in $X_\lang$
(i.e., atoms). The terms in $T_{\ecal_\lang}$ are called
\textit{built-ins} and represent the portion of the specification that
will be handled by the SMT solver (i.e., semantic data types). In this
setting, an SMT instance is a formula $\phi \in
\oqff{\Sigma_\lang}{X_\lang}$ and the initial algebra
$\tcal_{\ecal_\lang^{+}}$, where $\ecal_\lang^{+}$ is a
\textit{decidable extension} of $\ecal_\lang$ such that
\begin{align*}
  \phi
  \textnormal{ is satisfiable in $\tcal_{\ecal_\lang^+}$} \; \iff \; (\exists
  \func{\sigma}{X_\lang}{T_{\Sigma_\lang}})\; \tcal_{\ecal_\lang}\MODELS\phi\sigma.
\end{align*}
Many decidable theories $\ecal_{\lang}^+$ of interest are supported by
SMT solvers satisfying this requirement
(see~\cite{rocha-rewsmtjlamp-2017} for details).  In this work, the
latest alpha release of Maude that integrates
Yices2~\cite{dutertre-yices2-2014} and CVC4~\cite{barrett-cvc4-2011}
is used for reachability analysis.
\section{Spatial Concurrent Constraint Systems with Extrusion}
\label{sec.sccp}

This section presents the syntax and the structural operational
semantics of spatial concurrent constraints systems with extrusion,
which is based mainly on~\cite{knight-sccp-2012}. This section also
introduces an example to illustrate the main features of the language.

\subsection{Spatial Constraint Systems}

The authors of \cite{knight-sccp-2012} extended the notion of $\CS$ to
account for distributed and multi-agent scenarios where agents have
their own space for local information and computation.

\paragraph{Locality and Nested Spaces.}
Each agent $i$ has a \emph{space} function $\sfunc{\cdot}_i$ from
constraints to constraints (recall that constraints can be viewed as
assertions). The function
\begin{align*}
  \sfunc{c}_i
\end{align*}
can be interpreted as an assertion stating that $c$ is a piece of
information that resides \emph{within a space attributed to agent}
$i$. An alternative \emph{epistemic interpretation} of $\sfunc{c}_i$
is an assertion stating that agent $i$ \emph{believes} $c$ or that $c$
holds within the space of agent $i$ (but it may or may not hold
elsewhere). Both interpretations convey the idea that $c$ is local to
agent $i$. Following this intuition, the assertion
\begin{align*}
  \sfunc{\sfunc{c}_j}_i
\end{align*}
is a hierarchical spatial specification stating that $c$ holds within
the local space the agent $i$ attributes to agent $j$. Nesting of
spaces such as in $\sfunc{\sfunc{\ldots\sfunc{c}_{i_m}
    \ldots}_{i_2}}_{i_1}$ can be of any depth.

\paragraph{Parallel Spaces.}
A constraint of the form
\begin{align*}
  \join{\sfunc{c}_i}{\sfunc{d}_j}
\end{align*}
can be seen as an assertion specifying that $c$ and $d$ hold within
two \emph{parallel/neighboring} spaces that belong to agents $i$ and
$j$. From a computational/concurrency point of view, it is possible to
think of $\sqcup$ as parallel composition; from a logic point of view,
$\sqcup$ corresponds to conjunction.

The notion of an $n$-agent spatial constraint system is formalized in
Definition~\ref{def.sccp.scs}.

\index{$({\Con},\sqsubseteq,\sfuncs)$, spatial constraint system, 
	$\SCS$}
\index{$\sfunc{\cdot}$, space function}
\begin{definition}[Spatial Constraint System \cite{knight-sccp-2012}]
  \label{def.sccp.scs}
  An $n$-agent \emph{spatial constraint system $($$n$-$\SCS$$)$} ${\bf
    C}$ is a $\CS$ $(\Con, \sqsubseteq)$ equipped with $n$ self-maps
  $\sfuncs$ over its set of constraints ${\Con}$ satisfying for each
  function $\sfunc{\cdot}_i:\Con \rightarrow \Con$:
  \begin{enumerate} 
    \item  [S.1]
    $\sfunc{\true}_i = \true,$ \text{and}
    \item [S.2] $\sfunc{\join{c}{d}}_i = \join{\sfunc{c}_i}
    {\sfunc{d}_i} \ \ \text{ for each } c,d \in \Con.$
  \end{enumerate}
\end{definition}
Property S.1 in Definition \ref{def.sccp.scs} requires space functions
to be strict maps (i.e., bottom preserving) where an empty local space
amounts to having no knowledge. Property S.2 states that space
functions preserve (finite) lubs, and also allows to join and
distribute the local information of any agent $i.$ Henceforth, given
an $n$-$\SCS$ ${\bf C}$, each $\sfunc{\cdot}_i$ is thought as the
\emph{space} (or space function) of the agent $i$ in ${\bf C}$. The
tuple $({\Con},\sqsubseteq,\sfuncs)$ denotes the corresponding
$n$-$\SCS$ with space functions $\sfuncs.$ Components of an $n$-$\SCS$
tuple shall be omitted when they are unnecessary or clear from the
context. When $n$ is unimportant, $n$-$\SCS$ is simply written as
$\SCS$.

\paragraph{Extrusion.}
Extrusion (i.e., mobility) plays a key role in distributed systems.
Following the algebraic approach, it is possible to provide each agent
$i$ with an \emph{extrusion} function $\uparrow_i : Con \rightarrow
Con$ ~\cite{guzman-sccpe-2017,guzman-sccp-2016}. The process
$\extr{i}{c}$ within a space context $\sfunc{\cdot}_i$ means that the
process $c$ must be executed outside of agent's $i$ space.

%
Definition~\ref{def.sccp.scse} presents the extension of spatial
constraint systems with extrusion.

\begin{definition}[Spatial Constraint System with 
  Extrusion~\cite{guzman-sccpe-2017}]
  \label{def.sccp.scse}
  An \emph{$n$-agent spatial constraint system with extrusion
    ($n$-$\SCSE$) is an $n$-$\SCS$ ${\bf C}$ equipped with $n$
    self-maps $\uparrow_1, \dots , \uparrow_n$ over $\Con$, written
    $({\bf C}, \uparrow_1, \dots , \uparrow_n)$}, such that each
  $\uparrow_i$ is the right inverse of $\sfunc{\cdot}_i$.
\end{definition}


\subsection{Spatial $\CCP$ with Extrusion}

The \emph{spatial concurrent constraint programming with extrusion}
($\SCCPwE$) calculus presented in this section follows the
developments of~\cite{knight-sccp-2012} and~\cite{guzman-sccpe-2017}.
The syntax of $\SCCPwE$ is parametric on an $\SCSE$ and it is
presented in Definition~\ref{def.sccp.sccpe.syntax}.


\begin{definition}[$\SCCPwE$\ Processes]\label{def.sccp.sccpe.syntax}
  Let ${\bf C}=({\Con},\sqsubseteq)$ be a constraint system, $A$ a
  set of $n$ agents, and $V$ an infinite countable set of
  variables. Let $(\textbf{C}, \sfuncs, \efuncs)$ be an $n$-$\SCSE$
  and consider the following EBNF-like syntax:
\vspace{-0.15cm}
\begin{align*} P & \; ::= \;
	\Stop\;\mid\; 
	\tell(c) \;\mid\; 
	\ask(c)\rightarrow P \;\mid\; 
	P \parallel P \;\mid\; 
	\K i P \;\mid\; 
	\extr{i}{P} \;\mid\;
	x \;\mid\; 
	\mu x.P 
\end{align*}
  \noindent where $c \in \Con$, $i \in A$, and $x \in V$. An expression
  $P$ in the above syntax is a \emph{process} if and only if every
  variable $x$ in $P$ occurs in the scope of an expression of the form
  $\mu x.P$. The set of processes of $\SCCPwE$ is denoted by
  $\textit{Proc}$.
\end{definition}

\noindent The $\SCCPwE$ calculus can be thought of as a
\emph{shared-spaces} model of computation. Each agent $i \in A$ has a
computational space of the form ${\K i \cdot}$ possibly containing
processes and other agents' spaces.  The basic constructs of $\SCCPwE$
are tell, ask, and parallel composition, and they are defined as in
standard $\CCP$~\cite{saraswat-ccpsem-1991}. A process $\tell(c)$
running in an agent $i \in A$ adds $c$ to its local store $s_i$,
making it available to other processes in the same space. This
addition, represented as $s_i \sqcup c$, is performed even if the
resulting constraint is inconsistent.  The process $\ask(c)\rightarrow
P$ running in space $i$ may execute $P$ if $c$ is entailed by $s_i$,
i.e., $c \sqsubseteq s_i$. The process $P \parallel Q$ specifies the
{\it parallel execution} of processes $P$ and $Q$; given $I=\{
i_1,\ldots,i_m \} \subseteq A$, the expression $\prod_{i \in I} P_i$
is used as a shorthand for $P_{i_1} \parallel \ldots \parallel
P_{i_m}$.  A construction of the form ${\K iP}$ denotes a process $P$
running within the agent $i$'s space. Any information that $P$
produces is available to processes that lie within the same space. The
process $\extr{i}{P}$ denotes that process $P$ runs outside the space
of agent $i$ and the information posted by $P$ resides in the store of
the parent of agent $i$. The behavior of recursive definitions of the
form $\mu x.P$ is represented by $P[\mu x.P / x]$, i.e., every free
occurrence of $x$ in $P$ is replaced with $\mu x.P$. In order to make
recursive definitions finite, it is necessary for recursion to be
guarded by an \emph{ask}, i.e., every occurrence of $x$ in $P$ is
within the scope of an ask process.  Note that the process $P$ in a
construction of the form $\ask(true) \rightarrow P$ is unguarded.

\begin{example}\label{exa.sccp.spatial}
  Consider the processes $P = \tell(c)$ and $Q=\ask(c)\rightarrow
  \tell(d)$. Consider ${\K i P } \parallel {\K i {Q} }$: by the above
  intuitions the constraints $c$ and $d$ are added to store of agent
  $i$. A similar behavior is achieved by the process ${\K i {P
      \parallel Q}}$, which also produces $c \sqcup d$ in the store of
  agent $i$ (note that $\sfunc{c \sqcup d}_i$ is equivalent to
  $\sfunc{c}_i\sqcup \sfunc{d}_i$ by Property S.2 in Definition
  \ref{def.sccp.scs}). In contrast, the process ${\K j {P} } \parallel
      {\K i {Q}}$ with $i\neq j$, does not necessarily add $d$ to the
      space of agent $i$ because $c$ is not made available for agent
      $i$; likewise in $P \parallel {\K i {Q}}$, $d$ is not added to
      the space of agent $i$.  Finally, consider ${\K i {P \parallel
          {\K j {\extr{j}{Q}}}}}$. In this case, because of extrusion,
      both $c$ and $d$ will be added to store of agent $i$. However,
      ${\K j {\K i {P \parallel {\extr{i}{Q}}}}}$ with $i\neq j$,
      results in the space of agent $i$ having $c$ and $d$, but
      neither are available for agent $j$.  Note that in ${\K i P}
      \parallel {\K j {\extr{i}{Q}}}$, the constraint $c$ is added to
      the space of agent $i$, but since $Q$ cannot be extruded in ${\K
        j {\extr{i}{Q}}}$, $d$ is not added to the spaces of $i$ or
      $j$.
\end{example}


Next, the notion of the projection of a spatial constraint $c$ for an
agent $i$ is introduced.

\begin{definition}[Views]\label{def.sccp.view} 
  The agent $i$'s \emph{view} of $c$, denoted as $\prj ci$, is given 
  by $\prj ci=\bigsqcup\{d\;|\;\cleq{\sfunc{d}_i}{c}\}$.
\end{definition}
The intuition is that $\prj ci$ represents all the information the
agent $i$ may see or have in $c$. For example if $c=\sfunc{d}_i \sqcup
\sfunc{e}_ j$ then agent $i$ sees $d$, so $d \sqsubseteq \prj ci$.

\subsection{Structural Operational Semantics of $\SCCPwE$}

The operational semantics of $\SCCPwE$ is defined over
configurations. A \textit{configuration} is a pair of the form
$\pairccp{P}{c} \in {\it Proc} \times \Con$, where $P$ is a process
and $c$ is the spatial distribution of information available to it;
the set of configurations is denoted by $\textit{Conf}$. The
structural operational semantics of $\SCCPwE$ is captured by the
binary transition relation $\rrarrow\;\subseteq{\it Conf}\times{\it
  Conf}$, defined by the rules in Figure \ref{fig.sccp.opsem}.

\begin{figure}
  \centering
  $
  \begin{array}{c}
    \infer[\rTell]{\conf{\tellp{c}}{d}  \redi 
	\conf{\Stop}{d \sqcup c}}{
    }
    \qquad
    \infer[\rAsk]{\conf{\askp{c}{P}}{d} \redi
    \conf{P}{d}} {\cleq{c}{d}}
    \\\\
    
    \infer[\rPar]{\conf{P\parallel Q}{d} \redi
    \conf{P'\parallel Q}{d'}} {\conf{P}{d} \redi \conf{P'}{d'}}
    \qquad
    \infer[\rSp]{\conf{\K i P}{c} \redi
    \conf{\K i {P'}} {c\sqcup \sfunc{c'}_i}} {\conf{P}{c^i} 
    \redi \conf{P'}{c'}}
    \\\\
    
    \infer[\rRec]{\conf{\mu x.P}{d} \redi \gamma} 
	{\conf{P \sfunc{\mu x.P / x }}{d} \redi \gamma}
	\qquad
	\infer[\rExt]{\conf{\extr{i}{\K i {P}}}{d} \redi 
	\conf{\extr{i}{\K i {P'}}}{d'}} 
	{\conf{P}{d} \redi \conf{P'}{d'}}
	{}
  \end{array}
  $
  \caption{Structural operational semantics of $\SCCPwE$.}
  \label{fig.sccp.opsem}
\end{figure}

The rules $\rTell, \rAsk$, $\rPar$, and $\rRec$ for the
basic processes and recursion are the standard ones in $\CCP$. In
order to avoid another version of the $\rPar$ rule for process $Q$,
parallel composition is assumed to be (associative and) commutative.
The rule $\rExt$ is a context-dependent definition for extrusion,
i.e., it requires a space process (i.e., $\sfunc{\cdot}_j$ with $j=i$)
and specifies extrusion in the sense explained before. In the rule
$\rSp$, $\prj ci$ represents all the information the agent $i$ may see
or have in $c$: namely, $P$ runs with store ${\prj ci}$, i.e., with
the agent's $i$ \emph{view} of $c$. Also note the information $c'$ added
to $\prj ci$ by the computation of $P$ corresponds to the information
added by $\K iP$ to the space of agent $i$.

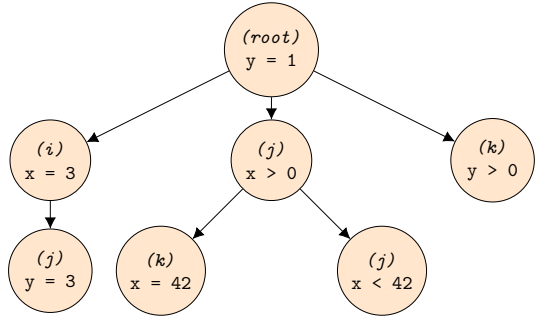
\begin{figure}[thbp] 
  \centering
	\resizebox{7cm}{!}{%
    \begin{tikzpicture}[node distance=2cm,
        every node/.style={fill=white, font=\sffamily}, align=center]
      \node (root)  [store]                
				    {\textit{(root)}\\y = 1};
            \node (i)		[store, below of=root, left of=root, xshift=-2cm]
			            {\textit{(i)}\\x = 3};
                  \node (j)  	[store, below of=root]  
				                {\textit{(j)}\\x > 0};
	                      \node (k)  	[store, below of=root, right of=root, xshift=2cm]  
					                    {\textit{(k)}\\y > 0};				   
                              \node (ji)   	[store, below of=i] 
			                              {\textit{(j)}\\y = 3};
                                    \node (kj)   	[store, below of=j, left of=j] 
				                                  {\textit{(k)}\\x = 42};
                                          \node (jj)   	[store, below of=j, right of=j] 
			                                          {\textit{(j)}\\x < 42};				  
                                                \draw[->]   (root) -- (i);
                                                \draw[->]   (root) -- (j);
                                                \draw[->]   (root) -- (k);     
                                                \draw[->]      (i) -- (ji);
                                                \draw[->]      (j) -- (kj);
                                                \draw[->]      (j) -- (jj);      
  \end{tikzpicture}}
  \caption{A spatial hierarchy of processes.}
  \label{fig.sccp.tree}
\end{figure}

\begin{example}
  One way of understanding $\rSp$ is by considering a constraint $c$
  as a tree-like structure. Each node in such a tree corresponds to
  the information (constraint) contained in an agent's space. Edges in
  the tree-like structure define the spatial hierarchy of agents. For
  example, given an appropriate $\SCSE$, Figure~\ref{fig.sccp.tree}
  corresponds to the following store $d$: \[ d \defsymbol (y=1) \sqcup
  \sfunc{x=3 \sqcup\sfunc{y=3}_j}_i \sqcup \sfunc{x>0
    \sqcup\sfunc{x=42}_k \sqcup\sfunc{x<42}_j}_j
  \sqcup\sfunc{y>0}_k. \] The configuration
  \[\conf{\sfunc{\askp{x=42}{P}}_j}{d}\] is a deadlock, while
  \[\conf{\sfunc{\sfunc{\askp{x=42}{P}}_k}_j}{d}
  \redi{\sfunc{\sfunc{P}_k}_j}{d}\] by the rules $\rSp$ and $\rAsk$,
  using the fact that $\cleq{x=42}{\prj{{\prj dj}}k}$.
\end{example}

\subsection{Example}
\label{sec.sccp.example}

This section includes an example that will be used throughout the
paper. The example models a system where an agent sends messages to
other agents through the spatial hierarchy.

As mentioned earlier, a distinctive property about $\SCCPwE$ is that
it allows to have inconsistent information within spaces; i.e., one
agent may have local information $c$ and the other some local
information $d$ such that $\join {c}{d} = \false$ (e.g., $c$ and $d$
could be $x=25$ and $x<20$, respectively). This also means that an
agent can send inconsistent information to different agents. 
\begin{figure}[htbp] 
	\centering
	\begin{subfigure}[b]{.32\textwidth}
		\resizebox{3cm}{!}{%
			\begin{tikzpicture}[node distance=2cm, 
			every node/.style={fill=white, font=\sffamily}, align=center]
			\node (root)  [curr]                
			{\textit{(root)}\\true\\\rule{1.5cm}{1.3pt}\\$P$};
			\node (0)	  [store, below of=root, left of=root]        
			{\textit{(0)}\\X = 25};
			\node (1)  	  [store, below of=root, right of=root]  
			{\textit{(1)}\\true};
			\node (01)    [store, below of=1] 
			{\textit{(0)}\\Y < 5};
			\draw[->]   (root) -- (0);
			\draw[->]   (root) -- (1);
			\draw[->]      (1) -- (01);
			\end{tikzpicture}}
		\caption{Initial state.}
		\label{fig.sccp.sccp-p}
	\end{subfigure}
	\begin{subfigure}[b]{.32\textwidth}
		\centering
		\resizebox{3cm}{!}{%
			\begin{tikzpicture}[node distance=2cm, 
			every node/.style={fill=white, font=\sffamily}, align=center]
			\node (root)  [store]                
			{\textit{(root)}\\true};
			\node (0)	  [curr, below of=root, left of=root]        
			{\textit{(0)}\\X = 25\\\rule{1.5cm}{1.3pt}\\$\extr{0}{\sfunc{{P_1}\parallel {\sfunc{P_2 }_0}}_1}$};
			\node (1)  	  [store, below of=root, right of=root]  
			{\textit{(1)}\\true};
			\node (01)    [store, below of=1] 
			{\textit{(0)}\\Y < 5};
			\draw[->]   (root) -- (0);
			\draw[->]   (root) -- (1);
			\draw[->]      (1) -- (01);
			\end{tikzpicture}}
		\caption{Execution in space 0.}
		\label{fig.sccp.sccp-p0}
	\end{subfigure}
	\begin{subfigure}[b]{.32\textwidth}
		\centering
		\resizebox{3cm}{!}{%
			\begin{tikzpicture}[node distance=2cm,
			every node/.style={fill=white, font=\sffamily}, align=center]
			\node (root)  [curr]                
			{\textit{(root)}\\true\\\rule{1.5cm}{1.3pt}\\$\sfunc{{P_1}\parallel {\sfunc{P_2 }_0}}_1$};
			\node (0)	  [store, below of=root, left of=root]        
			{\textit{(0)}\\X = 25};
			\node (1)  	  [store, below of=root, right of=root]  
			{\textit{(1)}\\true};
			\node (01)    [store, below of=1] 
			{\textit{(0)}\\Y < 5};
			\draw[->]   (root) -- (0);
			\draw[->]   (root) -- (1);
			\draw[->]      (1) -- (01);
			\end{tikzpicture}}
		\caption{Extrusion from space 0.}
		\label{fig.sccp.sccp-p1}
	\end{subfigure}	
	
	\begin{subfigure}[b]{.32\textwidth}
		\centering
		\resizebox{3cm}{!}{%
			\begin{tikzpicture}[node distance=2cm, 
			every node/.style={fill=white, font=\sffamily}, align=center]
			\node (root)  [store]                
			{\textit{(root)}\\true};
			\node (0)	  [store, below of=root, left of=root]        
			{\textit{(0)}\\X = 25};
			\node (1)  	  [curr, below of=root, right of=root]  
			{\textit{(1)}\\true\\\rule{1.5cm}{1.3pt}\\${P_1}\parallel {\sfunc{P_2 }_0}$};
			\node (01)    [store, below of=1, yshift=-0.5cm] 
			{\textit{(0)}\\Y < 5};
			\draw[->]   (root) -- (0);
			\draw[->]   (root) -- (1);
			\draw[->]      (1) -- (01);
			\end{tikzpicture}}
		\caption{Execution in space 1.}
		\label{fig.sccp.sccp-p2}
	\end{subfigure}
	\begin{subfigure}[b]{.32\textwidth}
		\centering
		\resizebox{3cm}{!}{%
			\begin{tikzpicture}[node distance=2cm, 
			every node/.style={fill=white, font=\sffamily}, align=center]
			\node (root)  [store]                
			{\textit{(root)}\\true};
			\node (0)	  [store, below of=root, left of=root]        
			{\textit{(0)}\\X = 25};
			\node (1)  	  [curr, below of=root, right of=root]  
			{\textit{(1)}\\true\\\rule{1.5cm}{1.3pt}\\$\tell(Z \geq 10)$};
			\node (01)    [curr, below of=1, yshift=-0.85cm] 
			{\textit{(0)}\\Y<5\\\rule{1.5cm}{1.3pt}\\$P_2$};
			\draw[->]   (root) -- (0);
			\draw[->]   (root) -- (1);
			\draw[->]      (1) -- (01);
			\end{tikzpicture}}
		\caption{Parallel execution.}
		\label{fig.sccp.sccp-p3}
	\end{subfigure}
	\begin{subfigure}[b]{.32\textwidth}
		\centering
		\resizebox{3cm}{!}{%
			\begin{tikzpicture}[node distance=2cm,
			every node/.style={fill=white, font=\sffamily}, align=center]
			\node (root)  [store]                
			{\textit{(root)}\\true};
			\node (0)	  [store, below of=root, left of=root]        
			{\textit{(0)}\\X = 25};
			\node (1)  	  [store, below of=root, right of=root]  
			{\textit{(1)}\\Z >= 10};
			\node (01)    [curr, below of=1, yshift=-0.5cm] 
			{\textit{(0)}\\Y<5\\\rule{1.5cm}{1.3pt}\\$P_2$};
			\draw[->]   (root) -- (0);
			\draw[->]   (root) -- (1);
			\draw[->]      (1) -- (01);
			\end{tikzpicture}}
		\caption{Tell execution.}
		\label{fig.sccp.sccp-p4}
	\end{subfigure}
	
	\begin{subfigure}[b]{.32\textwidth}
		\centering
		\resizebox{3cm}{!}{%
			\begin{tikzpicture}[node distance=2cm, 
			every node/.style={fill=white, font=\sffamily}, align=center]
			\node (root)  [store]                
			{\textit{(root)}\\true};
			\node (0)	  [store, below of=root, left of=root]        
			{\textit{(0)}\\X = 25};
			\node (1)  	  [store, below of=root, right of=root]  
			{\textit{(1)}\\Z >= 10};
			\node (01)    [curr, below of=1, yshift=-0.75cm] 
			{\textit{(0)}\\Y < 5\\\rule{1.5cm}{1.3pt}\\${\extr{0}{\extr{1}
						{\sfunc{\sfunc{P_3}_2}_0}}}$};
			\draw[->]   (root) -- (0);
			\draw[->]   (root) -- (1);
			\draw[->]      (1) -- (01);
			\end{tikzpicture}}
		\caption{Ask execution.}
		\label{fig.sccp.sccp-p5}
	\end{subfigure}
	\begin{subfigure}[b]{.32\textwidth}
		\centering
		\resizebox{3cm}{!}{%
			\begin{tikzpicture}[node distance=2cm,
			every node/.style={fill=white, font=\sffamily}, align=center]
			\node (root)  [store]                
			{\textit{(root)}\\true};
			\node (0)	  [store, below of=root, left of=root]        
			{\textit{(0)}\\X = 25};
			\node (1)  	  [curr, below of=root, right of=root]  
			{\textit{(1)}\\Z >= 10\\\rule{1.5cm}{1.3pt}\\$\extr{1}
				{\sfunc{\sfunc{P_3}_2}_0}$};
			\node (01)    [store, below of=1, yshift=-0.5cm] 
			{\textit{(0)}\\Y < 5};		
			\draw[->]   (root) -- (0);
			\draw[->]   (root) -- (1);
			\draw[->]      (1) -- (01);
			\end{tikzpicture}}
		\caption{Extrusion from space 0.}
		\label{fig.sccp.sccp-p6}
	\end{subfigure}
	\begin{subfigure}[b]{.32\textwidth}
		\centering
		\resizebox{3cm}{!}{%
			\begin{tikzpicture}[node distance=2cm, 
			every node/.style={fill=white, font=\sffamily}, align=center]
			\node (root)  [curr] 
			{\textit{(root)}\\true\\\rule{1.5cm}{1.3pt}\\$\sfunc{\sfunc{P_3}_2}_0$};
			\node (0)	  [store, below of=root, left of=root]        
			{\textit{(0)}\\X = 25};
			\node (1)  	  [store, below of=root, right of=root]  
			{\textit{(1)}\\Z >= 10};
			\node (01)    [store, below of=1] 
			{\textit{(0)}\\Y < 5};
			\draw[->]   (root) -- (0);
			\draw[->]   (root) -- (1);
			\draw[->]      (1) -- (01);
			\end{tikzpicture}}
		\caption{Extrusion from space 1.}
		\label{fig.sccp.sccp-p7}
	\end{subfigure}
	
	\begin{subfigure}[b]{.32\textwidth}
		\centering
		\resizebox{3.5cm}{!}{%
			\begin{tikzpicture}[node distance=2cm,
			every node/.style={fill=white, font=\sffamily}, align=center]
			\node (root)  [store]                
			{\textit{(root)}\\true};
			\node (0)	  [curr, below of=root, left of=root]        
			{\textit{(0)}\\X = 25\\\rule{1.5cm}{1.3pt}\\$\sfunc{P_3}_2$};
			\node (1)  	  [store, below of=root, right of=root]  
			{\textit{(1)}\\Z >= 10};
			\node (01)    [store, below of=1] 
			{\textit{(0)}\\Y < 5};
			\draw[->]   (root) -- (0);
			\draw[->]   (root) -- (1);
			\draw[->]      (1) -- (01);
			\end{tikzpicture}}
		\caption{Execution in space 0.}
		\label{fig.sccp.sccp-p8}
	\end{subfigure}
	\begin{subfigure}[b]{.32\textwidth}
		\centering
		\resizebox{3.5cm}{!}{%
			\begin{tikzpicture}[node distance=2cm,
			every node/.style={fill=white, font=\sffamily}, align=center]
			\node (root)  [store]                
			{\textit{(root)}\\true};
			\node (0)	  [store, below of=root, left of=root]        
			{\textit{(0)}\\X = 25};
			\node (1)  	  [store, below of=root, right of=root]  
			{\textit{(1)}\\Z >= 10};
			\node (01)    [store, below of=1] 
			{\textit{(0)}\\Y < 5};
			\node (20)    [curr, below of=0, yshift=-0.5cm] 
			{\textit{(2)}\\true\\\rule{1.5cm}{1.3pt}\\$\tell(W<Y)$};			
			\draw[->]   (root) -- (0);
			\draw[->]   (root) -- (1);
			\draw[->]      (1) -- (01);
			\draw[->]      (0) -- (20);
			\end{tikzpicture}}
		\caption{Execution in space 2.}
		\label{fig.sccp.sccp-p9}
	\end{subfigure}	
	\begin{subfigure}[b]{.32\textwidth}
		\centering
		\resizebox{3.5cm}{!}{%
			\begin{tikzpicture}[node distance=2cm,
			every node/.style={fill=white, font=\sffamily}, align=center]
			\node (root)  [store]                
			{\textit{(root)}\\true};
			\node (0)	  [store, below of=root, left of=root]        
			{\textit{(0)}\\X = 25};
			\node (1)  	  [store, below of=root, right of=root]  
			{\textit{(1)}\\Z >= 10};
			\node (01)    [store, below of=1] 
			{\textit{(0)}\\Y < 5};
			\node (20)    [store, below of=0] 
			{\textit{(2)}\\W < Y};			
			\draw[->]   (root) -- (0);
			\draw[->]   (root) -- (1);
			\draw[->]      (1) -- (01);
			\draw[->]      (0) -- (20);
			\end{tikzpicture}}
		\caption{Tell execution (final state).}
		\label{fig.sccp.sccp-final}
	\end{subfigure}	
	\caption{Execution of process $P$ and evolution of the $\SCCPwE$ system.}
	\label{fig.sccp.sccp}
\end{figure}
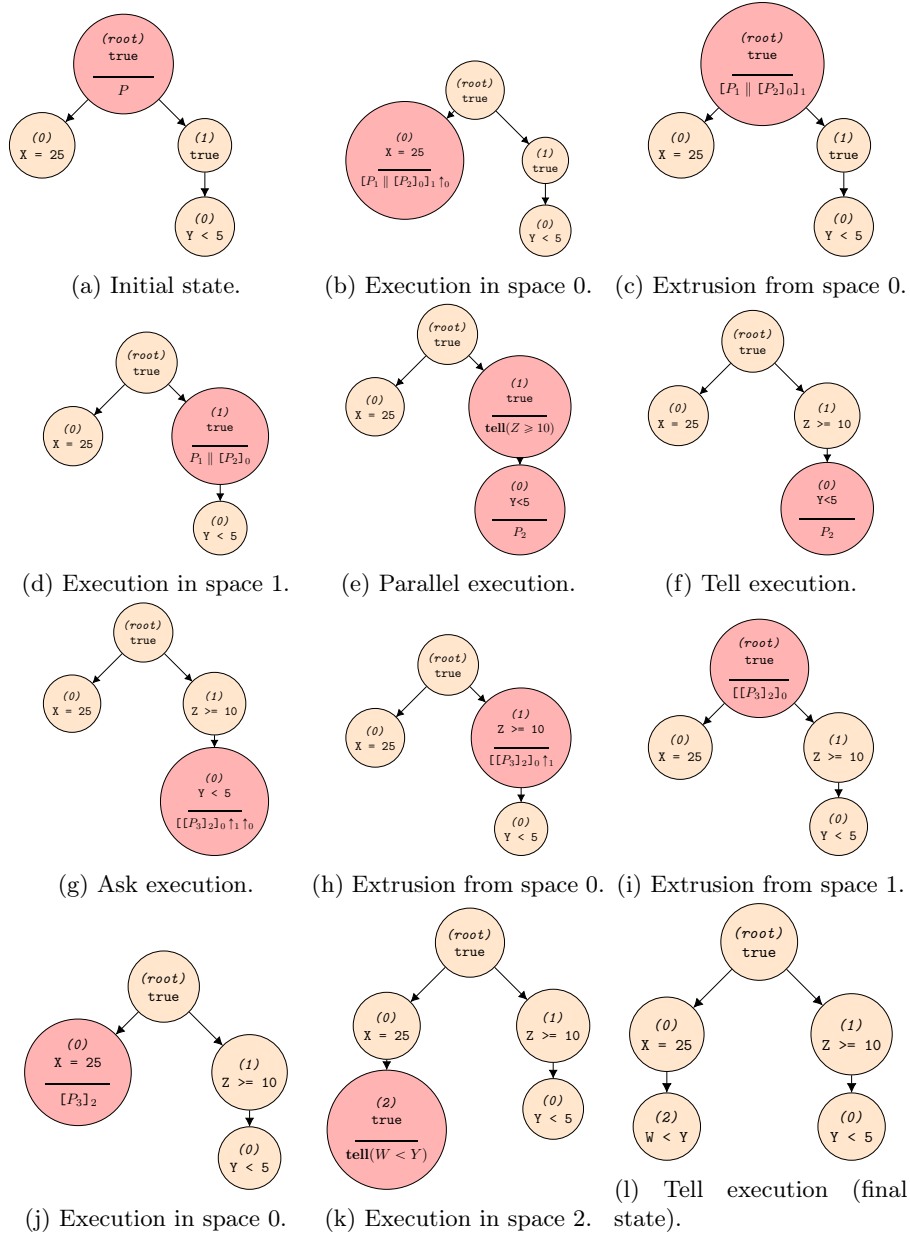

As an example of how hierarchical distributed processes evolve with
respect to the $\SCCPwE$'s structural operational semantics, consider
the sequence of system states depicted in Figure~\ref{fig.sccp.sccp}.
These states (figures~\ref{fig.sccp.sccp-p}-\ref{fig.sccp.sccp-final})
correspond to a step-by-step execution from the initial configuration
(Figure~\ref{fig.sccp.sccp-p}), with the process $P$ defined as
follows:
\begin{align*}
	P & \defsymbol
	\sfunc{\extr{0}{\sfunc{{P_1}\parallel 
				{\sfunc{P_2 }_0}}_1}}_0
\end{align*}
and where:
\begin{align*}
	P_1 & \defsymbol \tell(Z \geq 10) \\
	P_2 & \defsymbol \ask(Y<20)\rightarrow {\extr{0}{\extr{1}
			{\sfunc{\sfunc{P_3}_2}_0}}} \\
	P_3 & \defsymbol \tell(W<Y).
\end{align*}
\section{Rewriting Logic Semantics}
\label{sec.rew}

The rewriting logic semantics of a language $\lcal$ is a rewrite
theory $\rcal_\lcal = (\Sigma_\lcal,E_\lcal \uplus B_\lcal, R_\lcal)$
where $\rews_{\rcal_\lcal}$ provides a step-by-step formal description
of $\lcal$'s {\em observable} run-to-completion mechanisms. The
conceptual distinction between equations and rules in $\rcal_\lcal$
has important consequences that are captured by rewriting logic's {\em
	abstraction dial}~\cite{meseguer-rlsproject-2013}. Setting the level
of abstraction in which all the interleaving behavior of evaluations
in $\lcal$ is observable, corresponds to the special case in which the
dial is turned down to its minimum position by having $E_\lcal \uplus
B_\lcal = \emptyset$. The abstraction dial can also be turned up to
its maximal position as the special case in which $R_\lcal =
\emptyset$, thus obtaining an equational semantics of $\lcal$ without
observable transitions. The rewriting logic semantics $\rcal$ sets
such an abstraction dial at a position that exactly captures the
interleaving behavior of $\SCCPwE$.

This section presents a rewriting logic semantics for $\SCCPwE$ in
the form of a rewrite theory $\rcal = (\Sigma, E, R)$ with
\textit{topsort} \cde{Sys}. The data types supporting the state
structure are defined by the equational theory $(\Sigma, E)$ and the
state transitions are axiomatized by the rewrite rules $R$. The
constraint system, as detailed later in this section, is materialized
by an equational theory $(\Sigma_\lang,E_\lang) \subseteq (\Sigma,E)$
of built-ins whose quantifier-free formulas are handled by SMT
decision procedures. The complete specification of $\rcal$ can be
found in Appendix~\ref{sec.sccp.maude}.

Figure~\ref{fig.rew.modstr} depicts the module structure of $\rcal$.
Maude offers three modes for importing a module, namely,
$\cde{protecting}$, $\cde{extending}$, and $\cde{including}$. When a
functional module $M'$ protects a functional module $M$, it means that
the data types from $M$ are kept the same in $M'$ (i.e., no junk and
no confusion are added to the sorts of $M$). When a functional module
$M'$ extends a functional module $M$, it means that the data types
from $M$ can be extended with new terms, but existing terms are not
identified (i.e., junk is allowed but confusion is not).  When a
functional module $M'$ imports a functional module $M$, it means that
the data types from $M$ can be extended with new terms and existing
terms can be identified (i.e., junk and confusion are allowed). In
$\rcal$, modules are imported by protecting (denoted by a triple arrow
$\Rrightarrow$) or including (denoted by a single arrow $\rightarrow$)
submodules. See~\cite{clavel-maudebook-2007} for details about the
three different modes of module importation in Maude.


\begin{figure}[htb] 
  \centering
  \begin{tikzpicture}[node distance=1.35cm,
    every node/.style={fill=white, font=\sffamily}, align=center]
    \node (SCCP)    [module]                
                    {SCCP};
    \node (STATE)   [module, below of=SCCP]        
    				{SCCP-STATE};
    \node (SYNTAX)  [module, below of=STATE]  
    				{SCCP-SYNTAX};
    \node (AGENT)   [module, below of=SYNTAX, xshift=-2cm] 
    				{AGENT-ID};
    \node (INAT)    [module, below of=AGENT, xshift=-2cm]    
    				{INAT};
    \node (EXT)     [module, below of=AGENT, xshift=2cm]
    				{EXT-BOOL};
    \node (BOOL)    [module, below of=EXT]
    				{BOOL};
    \node (OPS)     [module, below of=BOOL] 
    				{BOOL-OPS};
    \node (TRUTH) 	[module, below of=OPS]
    				{TRUTH-VALUE};
    \node (SMT)     [module, below of=SCCP, right of=EXT, xshift=2cm]
    				{SMT-UTIL};
    \node (INTEGER) [module, below of=SMT, right of=BOOL, xshift=2cm]
    				{INTEGER};
    \node (BOOLEAN) [module, below of=INTEGER, right of=OPS, xshift=2cm]
    				{BOOLEAN};
    \node (META)    [module, below of=SMT, right of=SMT, xshift=2cm]
    				{META-LEVEL};
    \draw[triplearrow]   (SYNTAX) -- (STATE);
    \draw[triplearrow]    (AGENT) |- (SYNTAX);
    \draw[triplearrow]     (INAT) |- (AGENT);
    \draw[triplearrow]      (EXT) |- (AGENT);
    \draw[triplearrow]     (BOOL) -- (EXT);
    \draw[triplearrow]      (OPS) -- (BOOL);
    \draw[triplearrow]    (TRUTH) -- (OPS);
    \draw[triplearrow]    (TRUTH) -| (INAT);
    \draw[triplearrow]  (BOOLEAN) -- (INTEGER);
    \draw[triplearrow]  (INTEGER) -- +(-1.7,0) |- (SYNTAX);
    \draw[triplearrow]      (SMT) |- (SCCP);
    \draw[triplearrow]     (META) |- (SMT);
    \draw[->]     (STATE) -- (SCCP);
    \draw[->]   (INTEGER) -- (SMT);    
    \draw[dashed]          (META) -- +(0,-2);
  \end{tikzpicture}
  \caption{Module structure of the rewriting logic semantics of $\SCCPwE$.}
  \label{fig.rew.modstr}
\end{figure}

\subsection{The Constraint System}

The materialization of the constraint system in $\rcal$ uses SMT
solving technology. Given a \textit{many-sorted} (i.e., order-sorted
without sort structure) equational theory $\ecal_\lang =
(\Sigma_\lang,E_\lang)$ and a set of variables $X_\lang \subseteq X$
over the sorts in $\Sigma_\lang$, the formulas under consideration are
in the set $\oqff{\Sigma_\lang}{X_\lang}$ of quantifier-free
$\Sigma_\lang$-formulas: each formula being a Boolean combination of
$\Sigma_\lang$-equation with variables in $X_\lang$ (i.e., atoms). The
terms in $T_{\ecal_\lang}$ are called \textit{built-ins} and represent
the portion of the specification that will be handled by the SMT
solver (i.e., semantic data types). Thus, an SMT instance is a formula
$\phi \in \oqff{\Sigma_\lang}{X_\lang}$ and the initial algebra
$\tcal_{\ecal_\lang^{+}}$, where $\ecal_\lang^{+}$ is a
\textit{decidable extension} of $\ecal_\lang$ (typically by adding
some inductive consequences and, perhaps, some extra symbols) such
that
\begin{align*}
	\phi
	\textnormal{ is satisfiable in $\tcal_{\ecal_\lang^+}$} \; \iff \; (\exists
	\func{\sigma}{X_\lang}{T_{\Sigma_\lang}})\; \tcal_{\ecal_\lang}\MODELS\phi\sigma.
\end{align*}
Many decidable theories $\ecal_{\lang}^+$ of interest are supported by
SMT solvers satisfying this requirement
(see~\cite{rocha-rewsmtjlamp-2017} for details).

The $\cde{INTEGER}$ module implements the equational theory
$\ecal_\lang = (\Sigma_\lang,E_\lang)$ of built-ins and the sort
$\cde{Boolean}$ defines the data type used to represent the
constraints. The topmost concurrent transitions in $\rcal$ are then
\textit{symbolic} rewrite steps of state terms with subterms in the
set $T_\Sigma(X_\lang)_\cde{Sys}$ of $\Sigma$-terms of sort
$\cde{Sys}$ with variables over the built-in sorts in $\Sigma_\lang$.

\begin{lemma}\label{lem.rew.constsys}
	The pair $\mathbf{B} =
	\left(\oqff{\Sigma_\lang}{X_\lang},\RMODELS\right)$ is
	a constraint system, where $X_\lang$ are the variables ranging over
	the sorts $\cde{Boolean}$ and $\cde{Integer}$.
\end{lemma}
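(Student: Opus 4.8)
The plan is to verify directly that $\mathbf{B} = \left(\oqff{\Sigma_\lang}{X_\lang},\RMODELS\right)$ meets Definition~\ref{def:cs}, i.e.\ that it is a complete algebraic lattice whose bottom and top are the valid and the unsatisfiable formulas. First I would make the order explicit: reading $\RMODELS$ as reverse semantic entailment, $\phi \RMODELS \psi$ holds iff $\psi \MODELS \phi$, that is, iff every ground instance satisfying $\psi$ in $\tcal_{\ecal_\lang}$ also satisfies $\phi$; intuitively $\psi$ carries at least as much information as $\phi$, matching the reading of $\cleq{d}{c}$ as $c \entails d$ from Definition~\ref{def:cs}. Reflexivity and transitivity of $\RMODELS$ are immediate from the corresponding properties of satisfaction in the initial algebra. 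Antisymmetry, however, does not hold on the nose, since two distinct but logically equivalent formulas (e.g.\ $x = 1$ and $1 = x$) entail one another; so I would carry out the construction on the quotient of $\oqff{\Sigma_\lang}{X_\lang}$ by mutual entailment (equivalently, the Lindenbaum--Tarski algebra of $\ecal_\lang$), which turns $\RMODELS$ into a genuine partial order.

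Next I would exhibit the lattice structure on this quotient. The join $\join{\phi}{\psi}$ is the conjunction $\phi \wedge \psi$: it entails each of $\phi$ and $\psi$, and it is entailed by any formula entailing both, so it is the least upper bound in the $\RMODELS$ order; dually, disjunction realizes the meet $\meet{\phi}{\psi}$. The bottom element $\true$ is the valid formula, which is entailed by every constraint and hence sits below all of them, while the top element $\false$ is the unsatisfiable formula, which entails every constraint and hence sits above all of them. Since quantifier-free formulas are closed under $\wedge$, $\vee$, and $\neg$ and contain representatives of $\true$ and $\false$, the quotient is a Boolean algebra, in particular a bounded distributive lattice, so all \emph{finite} lubs and glbs exist and are computed syntactically.

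The main obstacle is completeness: Definition~\ref{def:cs} demands lubs of \emph{arbitrary} (possibly infinite) families, whereas the lub of infinitely many quantifier-free formulas is in general an infinite conjunction that need not be expressible in the finitary quantifier-free fragment, and indeed the free Boolean algebra on countably many of the Boolean variables in $X_\lang$ is not complete, so the bare quotient can fail to have all suprema. I would resolve this exactly as the general first-order construction sketched in Section~\ref{sec.prelim} does, by passing to the associated algebraic completion: identify each constraint with its set of ground consequences in $\tcal_{\ecal_\lang}$, order these consequence sets by reverse inclusion, and realize the lub of a family as the closure of the union of the corresponding information. This completion is a complete lattice in which the (classes of) quantifier-free formulas are precisely the compact elements, which simultaneously yields completeness and algebraicity. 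The remaining routine work is to check that the syntactic operations $\wedge$, $\vee$ and the constants $\true$, $\false$ agree with the lattice operations of this completion, after which the lemma follows from the standard fact that the constraints of a first-order theory ordered by entailment form a constraint system.
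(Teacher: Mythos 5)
Your proposal is correct and, in fact, supplies an argument where the paper offers none: the paper does not prove Lemma~\ref{lem.rew.constsys} at all, but only follows it with the remark that the elements of $\oqff{\Sigma_\lang}{X_\lang}$ are equivalence classes modulo semantic equivalence in $\tcal_{\ecal_\lang}$ ``to guarantee uniqueness of least upper bounds.'' That remark is exactly your quotient (Lindenbaum--Tarski) step, which repairs antisymmetry; your identification of $\sqcup$ with conjunction, $\true$ with the valid formula, and $\false$ with the unsatisfiable one is the intended reading and is consistent with how the implementation joins stores via \cde{and}.

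The one substantive point where you go beyond the paper is the completeness/algebraicity obstacle, and you are right to flag it: the quotient of the quantifier-free fragment is a countably infinite Boolean algebra (over the Boolean variables alone it is the free Boolean algebra on countably many generators), and no countably infinite Boolean algebra is complete, so arbitrary lubs genuinely fail to exist and Definition~\ref{def:cs} is not satisfied on the nose. The paper's remark addresses only uniqueness of lubs, not their existence, so this is a gap in the lemma as literally stated rather than in your proof. Your proposed repair --- pass to the completion by entailment-closed consequence sets ordered by reverse inclusion, with the classes of quantifier-free formulas as the compact elements --- is the standard construction of a constraint system from a first-order theory and is sound. It is worth adding one sentence you only imply: the reason the paper can work entirely in the finitary fragment is that the operational semantics only ever forms \emph{finite} joins (stores grow by finitely many \cde{tell} steps), so the compact elements are all that is ever manipulated, and the SMT solver decides entailment precisely on those. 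Stating the lemma either for the completion, or for a weakened notion of constraint system requiring only finite lubs, would make your argument (and the paper's claim) exact.
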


\noindent The elements in $\oqff{\Sigma_\lang}{X_\lang}$ are
equivalence classes of quantifier-free $\Sigma_\lang$-formulas of sort
$\cde{Boolean}$ modulo semantic equivalence in $\tcal_{\ecal_\lang}$
(this technicality guarantees, e.g., the uniqueness of least upper
bounds). Therefore, by an abuse of notation, the constraint system
$\mathbf{B}$ has quantifier-free $\Sigma_\lang$-formulas of sort
$\cde{Boolean}$ as the constraints and the inverse $\RMODELS$ of the
semantic validity relation $\MODELS$, w.r.t. the initial model
$\tcal_{\ecal_\lang}$, as the entailment relation.

In order to use $\mathbf{B}$ as the underlying constraint system,
$\rcal$ relies on the current version of Maude that is integrated with
the CVC4~\cite{barrett-cvc4-2011} and
Yices2~\cite{dutertre-yices2-2014} SMT solvers. The $\cde{SMT-UTIL}$
module encapsulates this integration, which requires the reflective
capabilities of Maude available from the $\cde{META-LEVEL}$
module. The function $\cde{entails}$ implements the semantic validity
relation $\MODELS$ (w.r.t. $\tcal_{\ecal_\lang}$) using the auxiliary
functions $\cde{check-sat}$ and $\cde{check-unsat}$ (observe that the
sort $\cde{Boolean}$ is different to the usual sort $\cde{Bool}$ for
Boolean terms in Maude):

\begin{maude}
  op entails     : Boolean Boolean -> Bool .
  op check-sat   : Boolean -> Bool .
  op check-unsat : Boolean -> Bool .
  eq check-sat(B)
   = metaCheck(['INTEGER], upTerm(B)) .
  eq check-unsat(B)
   = not(check-sat(B)) .
  eq entails(C1, C2)
   = check-unsat(C1 and not(C2)) .
\end{maude}

\noindent The function invocation $\cde{check-sat(B)}$ evaluates to
true iff $\cde{B}$ is satisfiable; alternatively, it evaluates to
false if $\cde{B}$ is unsatisfiable or if the SMT solver times
out. The function invocation $\cde{check-unsat(B)}$ returns true iff
$\cde{B}$ is unsatisfiable. Note that if the constraints $\mathbf{B}$
are decidable, then $\cde{check-unsat}$ is not only sound but
complete. More precisely, if $\Gamma$ is a finite subset of decidable
constraints in $\mathbf{B}$ and $\phi$ is also a decidable constraint
in $\mathbf{B}$, then the following equivalence holds:
\[\tcal_{\ecal_\lang} \MODELS {\bigwedge_{\gamma\in\Gamma} \gamma}\Rightarrow \phi  \quad \iff \quad \mth{entails}\left({\bigwedge_{\gamma\in\Gamma} \gamma}\;, \phi\right).\]

\subsection{System States}

At the top level, the system is represented by the top sort
$\cde{Sys}$ defined in $\cde{SCCP-STATE}$:

\begin{maude}
  sort Sys .
  op {_} : Cnf -> Sys [ctor] .
\end{maude}

\noindent The argument of a state is the configuration of objects
representing the setup of the agents and processes in the system. Sort
$\cde{Cnf}$ is that of configuration of agents in an object-like
notation. More precisely, sort $\cde{Cnf}$ represents multisets of
terms of sort $\cde{Obj}$, with set union denoted by juxtaposition. An
object is by itself a configuration of objects, namely, the singleton
one; constant $\cde{mt}$ denotes the empty configuration and it is the
identity of the union operator. There are two types of objects:
process objects and store objects. Process and stores objects are
represented by triples $\cde{[\_,\_,\_]}$ (as $\cde{Obj}$). The first
two arguments of both a process and a store object are its type
(either $\cde{process}$ or $\cde{store}$, as $\cde{Cid}$) and its
identifier (as $\cde{Aid}$). The third argument of a process object is
the program it is executing (as $\cde{SCCPCmd}$) and the third
argument of a store object is a formula representing the constraint of
its corresponding agent (as $\cde{Boolean}$). 

\begin{maude}
  sorts Cid Obj Cnf .
  subsorts Obj < Cnf .
  ops store process : -> Cid .
  op [_,_,_] : Cid Aid Boolean -> Obj [ctor] .
  op [_,_,_] : Cid Aid SCCPCmd -> Obj [ctor] .
  op mt : -> Cnf [ctor] .
  op __ : Cnf Cnf -> Cnf [ctor assoc comm id: mt] .
  op {_} : Cnf -> Sys [ctor] .
\end{maude}

\noindent The idea is that in any observable state there can be many
process objects executing in an agent's space but there must be
exactly one store object per agent (i.e., space). More precisely, in
an observable state, each agent's space is represented by a set of
terms: some encoding the state of execution of all its processes and
exactly one object representing its local store.

Process and store objects use a qualified name (sort $\cde{Aid}$)
identifying to which agent's space they belong; this sort is defined
in module $\cde{AGENT-ID}$. Natural numbers (sort \cde{iNat}), in
Peano notation and with an equality
enrichment~\cite{gutierrez-eqenrscp-2015}, are used to specify agents'
identifiers. The hierarchical structure of spaces is modeled as a
tree-like structure where the root space is identified by the constant
$\cde{root}$. Any other qualified name corresponds to a dot-separated
list of agent identifiers, arranged from left to right. That is,
$\cde{3 . 1 . root}$ denotes that agent $\cde{3}$ is within the space
of agent $\cde{1}$, which in turn is within the top level of
$\cde{root}$.

\begin{maude}
  sorts iZero iNzNat iNat Aid .
  subsort iZero iNzNat < iNat .
  op root : -> Aid .
  op _._ : iNat Aid -> Aid .  
  op 0 : -> iZero [ctor] .
  op s_ : iNat -> iNzNat [ctor] .
  op _~_ : iNat iNat -> Bool [comm] .
\end{maude}

The processes in $\SCCPwE$ are modeled as \textit{commands} (sort
$\cde{SCCPCmd}$) as defined in the module $\cde{SCCP-SYNTAX}$:

\begin{maude}
  op 0 : -> SCCPCmd . 
  op tell_ : Boolean -> SCCPCmd . 
  op ask_->_ : Boolean SCCPCmd -> SCCPCmd .
  op _||_ : SCCPCmd SCCPCmd -> SCCPCmd [assoc comm gather (e E) ] .
  op <_>[_] : iNat SCCPCmd -> SCCPCmd .
  op rec(_,_) : iNat SCCPCmd -> SCCPCmd .
  op xtr(_,_) : iNat SCCPCmd -> SCCPCmd .
  op v(_) : iNat -> SCCPCmd .
\end{maude}

\noindent The argument of a $\cde{tell\_}$ command is a formula (as
$\cde{Boolean}$), namely, the formula to be added to the corresponding
store. The $\cde{ask\_->\_}$ command has a formula (as
$\cde{Boolean}$) and a program (as $\cde{SCCPCmd}$) as arguments,
denoting that if the given formula is entailed by the corresponding
store, then the process is to be executed next. Both arguments of the
$\cde{\_||\_}$ command are processes (as $\cde{SCCPCmd}$). The
arguments of the $\cde{<\_>[\_]}$, $\cde{rec(\_,\_)}$, and
$\cde{xtr(\_,\_)}$ commands are a natural number (representing the
identifier of a descendant, a variable, and the identifier of the ancestor, respectively) and a command to be executed. Note that
the syntax of each command is very close to the actual syntax in the
$\SCCPwE$ model, e.g., constructs of the form $P \| Q$ and
$\left[ P \right]_i$ in $\SCCPwE$ are represented in the syntax of
$\cde{SCCPCmd}$ by terms of the form $\cde{P || Q}$ and $\cde{<i>[P]}$,
respectively.

\begin{example}\label{exa.rew.state}
Using the functional module \cde{SCCP-STATE}, the $\SCCPwE$ space
structure in Example~\ref{fig.sccp.sccp-p} can be represented as
follows:

\begin{maude}
  { [store, root, true]
    [store, 0 . root, X:Integer === 25]
    [store, s 0 . root, true]
    [store, 0 . s 0 . root, Y:Integer < 5] }
\end{maude}
\end{example}

\subsection{Auxiliary Operations}

There are three auxiliary operations defined in the semantics. They
are used to verify whether a condition is satisfied by the system
before a transition happens and for replacing terms for the recursion
command.

The function symbol $\cde{is-prefix?}$ is defined in $\cde{AGENT-ID}$
to verify recursively whether an agent is descendant of other agent
based on its $\cde{Aid}$. For instance, $\cde{1 . 2}$ is descendant of
$\cde{2}$ because $\cde{2}$ is prefix of $\cde{1 . 2}$. By definition,
$\cde{root}$ is prefix of every agent. The equations defining the
$\cde{is-prefix?}$ function evaluate to $\cde{true}$ if the first
argument is a prefix of the second one; otherwise, it is
$\cde{false}$.

\begin{maude}
  op is-prefix? : Aid Aid -> Bool .
  eq is-prefix?(root, L) 
   = true . 
  eq is-prefix?(N . L, root) 
   = false .
  eq is-prefix?(N0 . L0, N1 . L1) 
   = (N0 . L0 ~ N1 . L1) or-else is-prefix?(N0 . L0, L1) .
\end{maude}

Some commands in $\SCCPwE$ require that a store exists in the
system. For that reason the $\cde{exists-store?}$ function symbol is
defined in module $\cde{SCCP-STATE}$: it is used to look within a
configuration of agents (as $\cde{Cnf}$) for a store with an given
$\cde{Aid}$. The equation defining $\cde{exists-store?}$ evaluate to
$\cde{true}$ if the store exists; otherwise, it is $\cde{false}$.

\begin{maude}  
  op exists-store? : Cnf Aid -> Bool .
  eq exists-store?(mt, L)
   = false .
  eq exists-store?( [process, L0, C0] X, L)
   = exists-store?(X,L) .
  eq exists-store?( [store, L0, B0] X, L)
   = (L0 ~ L) or-else exists-store?(X, L) .
\end{maude}

Finally, the recursion construct $P [ \mu x . P / x ]$ in $\SCCPwE$
means that every free occurrence of $x$ in $P$ is replaced with $\mu x
. P$. For this reason, the $\cde{replace}$ function symbol is defined as follows:
given a program $P$, a variable identifier $N$ (as $\cde{iNat}$), and
a program $C$, every free occurrence of $N$ in $P$ is substituted by $C$.

\begin{maude}  
  op replace : SCCPCmd iNat SCCPCmd -> SCCPCmd .
  eq replace( 0, N, C ) 
   = 0 . 
  eq replace( tell B, N, C ) 
   = tell B .
  eq replace( ask B -> C0, N, C ) 
   = ask B -> replace( C0, N, C ) ..
  eq replace( C0 || C1, N, C )
   = replace( C0, N, C ) || replace( C1, N, C ) .
  eq replace( < N0 >[ C0 ], N, C )
   = < N0 >[ replace( C0, N, C ) ] .
  eq replace( rec( N0, C0 ), N, C ) 
   = rec( N0, C0 ) .
  eq replace( xtr( N0, C0 ), N, C )  
   = xtr( N0, replace( C0, N, C ) ) .
  eq replace( v(N0), N, C ) 
   = if (N0 ~ N) then C else v(N0) fi .
\end{maude}

\subsection{System Transitions}

The state transitions in $\rcal$ comprise both invisible (given by
equations) and observable (given by rules) transitions.

There are two types of invisible transitions that are specified with
the help of equations. Namely, one to remove a $\cde{0}$ process from
a configuration and another one to join the contents of two stores of
the same space (i.e., two stores with the same $\cde{Aid}$). The
latter type of transition is important especially because when a new
process is spawned in a agent's space, a store with the empty
constraint (i.e., $\cde{true}$) is created for that space. If such
space existed before, then the idea is that the newly created store is
subsumed by the existing one (variable $\cde{L}$ is
of sort $\cde{Aid}$, $\cde{X}$ of sort $\cde{Cnf}$, and
$\cde{B0},\cde{B1}$ of sort $\cde{Boolean}$):

\begin{maude}
  eq { [ process, L0, 0 ] X }
   = { mt X } .
  eq [ store, L0, B0 ] [ store, L0, B1 ]
   = [ store, L0, B0 and B1 ] .
\end{maude}

There are six rules capturing the concurrent observable behavior in
the specification (variable $\cde{L}$ is of sort $\cde{Aid}$,
$\cde{X}$ of sort $\cde{Cnf}$, $\cde{B0},\cde{B1}$ of sort
$\cde{Boolean}$, $\cde{C0},\cde{C1}$ of sort $\cde{SCCPCmd}$, and
$\cde{N}$ of sort $\cde{iNat}$). 

\begin{maude}
  rl [tell] :
     { [ store, L0, B0 ] [process, L0, tell B1 ] X }
  => { [ store, L0, B0 and B1 ] [ process, L0, 0 ] X } .
  
 crl [ask] :
     { [ store, L0, B0 ] [ process, L0, ask B1  -> C1 ] X }
  => { [ store, L0, B0 ] [ process, L0, C1 ] X } 
  if entails(B0, B1) .
   
  rl [parallel] :
     { [ process, L0, C0 || C1 ] X }
  => { [ process, L0, C0 ] [ process, L0, C1 ] X } .
  
  rl [space] :
     { [ store, L0, B0 ] [ process, L0, < N0 >[ C0 ] ] X } 
  => { [ store, L0, B0 ] [ process, L0, 0 ] [ process, N0 . L0, C0] 
       [ store, N0 . L0, true ] X } .
  
  rl [recursion]:
     { [ process, L0, rec( N0, C0 ) ] X }
  => { [ process, L0, replace( C0, N0, rec( N0, C0 ) ) ] X } .
  
  rl [extrusion]:
     { [ process, N0 . L0, xtr( N0, C0 ) ] X }
  => { [ process, N0 . L0, 0 ] [ process, L0, C0 ] X } .
\end{maude}

\noindent Rule $\rlname{tell}$ implements the semantics of a process
executing a $\cde{tell}$ command by posting the given constraint in
the local store and by transforming such a process to the nil
process. Rule $\rlname{ask}$ executes command $\cde{C1}$ when the
guard $\cde{B1}$ in $\cde{ask B1 -> C1}$ holds: that is, when
$\cde{B1}$ is entailed by the local store $\cde{B0}$.  Rule
$\rlname{parallel}$ implements the semantics for parallel composition
of process by spawning the two process in the current space. Rule
$\rlname{space}$ creates a new space denoted by $\cde{N0 . L0}$ (as
$\cde{Aid}$) with an empty store (i.e., $\cde{true}$) and starts the
execution of program $\cde{C0}$ within the space. Rule
$\rlname{recursion}$ defines the semantics of a process executing a
$\cde{rec}$ command by using the aforementioned auxiliary function
$\cde{replace}$. Rule $\rlname{extrusion}$ executes process $\cde{C0}$
in the parent space of the agent and transitions the $\cde{xtr}$ process to the nil
process. Note that recursion command can lead to non-termination. It
is common in $\SCCPwE$ to guard such commands with an ask in order to
tame the potential non-termination.


\begin{example}\label{exa.rew.transition}
  Using the functional module \texttt{SCCP-SYNTAX}, the process $P$ in Figure~\ref{fig.sccp.sccp} can be represented as follows:
  
  \begin{maude}
  xtr(0,< s 0 >[tell (Z:Integer >= 10) || < 0 >[ask Y:Integer < 20 -> 
        xtr(0,xtr(s 0,< 0 >[< s s 0 >[tell (W:Integer < 
        Y:Integer)]]))]])
\end{maude}

  \noindent If this command is executed in the space of agent $\cde{0
    .  root}$ from the initial state in Example~\ref{exa.rew.state}
  (and depicted in Figure~\ref{fig.sccp.sccp-p}), it leads to
  the state

\begin{maude}
  { [store, root, true]
    [store, 0 . root, X:Integer === 25]
    [store, s 0 . root, Z:Integer >= 10]
    [store, s s 0 . 0 . root, W:Integer < Y:Integer]
    [store, 0 . s 0 . root, Y:Integer < 5] }
\end{maude}

\noindent which corresponds to the final state depicted in
Figure~\ref{fig.sccp.sccp-final}.
\end{example}

\begin{theorem}\label{thm.rew.main}
  If $\pairccp{P}{c}$ and $\pairccp{P'}{c'}$ are configurations of
  $\SCCPwE$ with underlying constraint system $\mathbf{B}$ restricted
  to decidable formulas, then
  \[\pairccp{P}{c} \stackrel{*}{\redi} \pairccp{P'}{c'} \quad \iff \quad \left\{ \overline{\pairccp{P}{c}}\right\} \stackrel{*}{\rews}_\rcal \left\{ \overline{\pairccp{P'}{c'}}\right\}, \]
  where $\overline{\pairccp{P}{c}}$ and $\overline{\pairccp{P'}{c'}}$
  are an encoding of the corresponding configurations in the syntax
  of $\rcal$.
\end{theorem}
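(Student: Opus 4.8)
The plan is to prove the two implications by putting SOS derivations and rewriting computations in correspondence through the encoding $\overline{\cdot}$, which flattens a spatial constraint into a multiset of store objects---one per node of the spatial hierarchy, carrying the flat information attributed to that node---and records the process as process objects that keep the space and extrusion operators syntactically. First I would fix this encoding and a partial inverse $\mth{dec}$ defined on the $E$-canonical forms of states, namely those containing no $\cde{0}$ process and at most one store per $\cde{Aid}$; here $\mth{dec}$ reconstructs the spatial operators of the process from the location path of each process object and reads the nested constraint off the distributed stores. Since the two equations (erasing nil processes and merging co-located stores) are confluent and operationally terminating modulo $B$ by the admissibility of $(\Sigma,E)$, every reachable state has a unique canonical form, so $\mth{dec}$ is well defined and satisfies $\mth{dec}(\overline{\pairccp{P}{c}}) = \pairccp{P}{c}$. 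This makes $\stackrel{*}{\redi}$ and $\stackrel{*}{\rews}_\rcal$ comparable: both directions reduce to tracking how single SOS steps and single rule-applications of $\rcal$ move configurations once states are read back through $\mth{dec}$.

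For soundness ($\Rightarrow$) I would induct on the proof tree of $\pairccp{P}{c}\redi\pairccp{P'}{c'}$. Rules $\rTell$ and $\rAsk$ map to $\rlname{tell}$ and $\rlname{ask}$, invoking Lemma~\ref{lem.rew.constsys} together with the equivalence established for $\cde{entails}$ to identify $\sqsubseteq$ with the decided entailment. The congruence rules $\rPar$ and $\rExt$ are absorbed by the associativity, commutativity and identity axioms of $\cde{Cnf}$: the induction hypothesis rewrites the sub-configuration named in the premise, and those axioms lift the rewrite to the enclosing state. The single rule not simulated in one step is $\rSp$: one application of $\rlname{space}$ exposes the child location $\cde{N0 . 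L0}$ (the freshly created $\true$ store being immediately absorbed by the merging equation into whatever store the encoding already placed there), after which the induction hypothesis supplies the rewrite sequence for the premise carried out at $\cde{N0 . L0}$; correctness here rests on the representation lemma below, which guarantees that running the child against the local store reproduces running it against the view $\prj{c}{i}$. Rule $\rRec$ matches $\rlname{recursion}$ once $\cde{replace}$ is checked to implement the substitution $P[\mu x.P/x]$.

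The representation lemma is the technical core: in any canonical encoded state, the store at a node $\ell = i_1 . \cdots . i_k . \cde{root}$ holds exactly the flat information that the represented spatial constraint $c$ attributes to $\ell$---equivalently, the flat part of the view taken iteratively down the path $\prj{c}{i_1}$---and a $\cde{tell}$ of $B_1$ at $\ell$ amounts to joining $\sfunc{\cdots\sfunc{B_1}_{i_k}\cdots}_{i_1}$ into $c$. I would prove it by induction on the depth $k$, using S.1 and S.2 to commute the space functions past the joins performed by the store-merging equation (so that separately posted pieces at one node coincide with a single space-wrapped join) and, for extrusion, using that each $\uparrow_i$ is the right inverse of $\sfunc{\cdot}_i$, which makes relocating a process one level up sound with respect to the nested constraint. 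This lemma is what reconciles the distributed stores of $\rcal$ with the single nested constraint of the SOS, and in particular it is where the delicate interaction between $\rlname{extrusion}$ and the $\extr{i}{\K i {\cdot}}$ pattern of $\rExt$ is discharged.

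The completeness direction ($\Leftarrow$) is where I expect the main obstacle. A rewriting computation passes through states that are images of no SOS configuration---most conspicuously the state right after $\rlname{space}$, in which a child already lives at $\cde{N0 . L0}$ although no $\rSp$ step has been recorded---so no step-for-step reading is available. My plan is to project each rule-application onto its effect after $E$-normalization and establish a dichotomy: applying a rule and renormalizing either leaves $\mth{dec}$ of the state unchanged (pure bookkeeping, as when $\rlname{space}$, $\rlname{recursion}$, or $\rlname{extrusion}$ merely installs the scaffolding for a subsequent local step) or advances it by exactly one SOS transition, the outcome being determined by the rule and the firing location. Granting the representation lemma, this dichotomy follows rule by rule. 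A final commutation argument---standard for topmost rules over an associative-commutative-identity state constructor---lets me reorder an arbitrary computation so that the bookkeeping steps cluster with the productive step they enable, grouping the whole computation into blocks each decoding to one SOS transition and thereby yielding the derivation $\pairccp{P}{c}\stackrel{*}{\redi}\pairccp{P'}{c'}$.
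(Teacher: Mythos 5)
Your proposal is correct and follows essentially the same strategy as the paper, which disposes of the theorem in a single sentence: ``structural induction on the $\redi$ and $\rews_\rcal$ relations.'' You supply precisely the details that the paper leaves implicit --- the representation lemma reconciling the distributed stores with the single nested constraint, the treatment of the non-step-for-step simulation of $\rSp$, and the commutation argument that groups bookkeeping rewrites with the productive step in the completeness direction --- so your elaboration is, if anything, more informative than the published proof.
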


\begin{proof}
  The proof follows by structural induction on the $\redi$ and
  $\rews_\rcal$ relations. \qed
\end{proof}
\section{Admissibility}
\label{sec.adm}

This section presents a map $\rcal \mapsto \rcal'$ that results in a
rewrite theory $\rcal'$, equivalent in terms of admissibility to
$\rcal$ (introduced in Section~\ref{sec.rew}) under some reasonable
assumptions, but in which dependencies to non-algebraic data types
such as terms over the built-ins or at the meta-level in $\rcal$ have
been removed. This section also presents proofs of admissibility of
the rewrite theory $\rcal'$. Such proofs are obtained mechanically
using the Maude Formal Environment
(MFE)~\cite{duran-mfecalco-2011,duran-mfetalcott-2011} and establish
the correspondence (i.e., soundness and completeness) between the
mathematical and operational semantics of $\rcal$. The specification
$\rcal'$ can be found in Appendix~\ref{sec.sccp.mfe}

The map $\rcal \mapsto \rcal'$ consists of the following items, which
make the specification amenable to mechanical verification in the
MFE:

\begin{itemize}
\item Changing the sort \cde{Bool} in \cde{TRUTH-VALUE} to the sort
  \cde{iBool} in the module \cde{ITRUTH-VALUE} and adjusting the
  specification to account for this new definition of Boolean values.
\item Removing all dependencies in \cde{SMT-UTIL} of the module
  \cde{META-LEVEL}.
\item Introducing a custom if-then-else-fi function symbol in
  \cde{SCCP-SYNTAX} and adjusting the definition of the auxiliary
  function symbol \cde{replace} to use this new version instead.
\end{itemize}

The module structure of the resulting specification $\rcal'$ is
depicted in Figure~\ref{fig.adm.rls-mfe}.  It is important to mention
that the dependency of \cde{META-LEVEL} in \cde{SMT-UTIL} amounts at
changing the definition of the function symbol \cde{check-sat} to a
constant value.
\begin{figure}[thpb] 
  \centering
  \begin{tikzpicture}[node distance=1.35cm,
    every node/.style={fill=white, font=\sffamily}, align=center]
    \node (SCCP)    [module]                
    			    {SCCP};
    \node (STATE)   [module, below of=SCCP]        
    			    {SCCP-STATE};
    \node (SYNTAX)  [module, below of=STATE]  
    			    {SCCP-SYNTAX};
    \node (AGENT)   [module, below of=SYNTAX, xshift=-2cm] 
    			    {AGENT-ID};
    \node (INAT)    [module, below of=AGENT, xshift=-2cm]    
    			    {INAT};
    \node (EXT)     [module, below of=AGENT, xshift=2cm]
    			    {IEXT-BOOL};
    \node (BOOL)    [module, below of=EXT]
    			    {IBOOL};
    \node (OPS)     [module, below of=BOOL] 
    			    {IBOOL-OPS};
    \node (TRUTH) 	[module, below of=OPS]
    			    {ITRUTH-VALUE};
    \node (SMT)     [module, below of=SCCP, right of=EXT, xshift=2.5cm]
    			    {SMT-UTIL};
    \node (INTEGER) [module, below of=SMT, right of=BOOL, xshift=2.5cm]
    			    {INTEGER};
    \node (BOOLEAN) [module, below of=INTEGER, right of=OPS, xshift=2.5cm]
    			    {BOOLEAN};
    \draw[triplearrow]   (SYNTAX) -- (STATE);
    \draw[triplearrow]    (AGENT) |- (SYNTAX);
    \draw[triplearrow]     (INAT) |- (AGENT);
    \draw[triplearrow]      (EXT) |- (AGENT);
    \draw[triplearrow]     (BOOL) -- (EXT);
    \draw[triplearrow]      (OPS) -- (BOOL);
    \draw[triplearrow]      (OPS) -- +(1.7,0) |- (SMT);
    \draw[triplearrow]    (TRUTH) -- (OPS);
    \draw[triplearrow]    (TRUTH) -| (INAT);
    \draw[triplearrow]  (BOOLEAN) -- (INTEGER);
    \draw[triplearrow]  (INTEGER) -- +(-1.7,0) |- (SYNTAX);
    \draw[triplearrow]      (SMT) |- (SCCP);    
    \draw[->]     (STATE) -- (SCCP);
    \draw[->]   (INTEGER) -- (SMT);
  \end{tikzpicture}
  \caption{Module structure of $\rcal'$.}
  \label{fig.adm.rls-mfe}
\end{figure}

\subsection{Equational Admissibility}

Recall from Section~\ref{sec.prelim} that an equational specification
is admissible if it is sort-decreasing, operational terminating, and
confluent.

\begin{lemma}\label{lem.adm.eqadm}
  The equational subtheory of $\rcal'$ is admissible.
\end{lemma}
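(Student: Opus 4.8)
The plan is to split admissibility into its three constituents — sort-decreasingness, operational termination, and confluence of the oriented equations $\overrightarrow{E'}$ modulo the structural axioms $B'$ — and to discharge each one mechanically with the matching tool of the MFE (the Maude Termination Tool and the Church-Rosser Checker). It is worth stressing up front that the map $\rcal \mapsto \rcal'$ was engineered precisely to make this feasible: replacing \cde{Bool} by the algebraic \cde{iBool}, collapsing \cde{check-sat} to a constant value (which severs the dependence of \cde{SMT-UTIL} on \cde{META-LEVEL} and on the non-algebraic built-ins), and turning the conditional if-then-else-fi into an ordinary defined symbol leave a pure order-sorted equational system, which is exactly the fragment these tools can analyze.

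First I would orient $E'$ and run the Maude Termination Tool. Termination rests on transparent structural measures: \cde{is-prefix?} and \cde{exists-store?} recur on a strictly smaller argument (the tail of an \cde{Aid}, resp.\ a smaller configuration), \cde{replace} recurs only on proper subterms of its \cde{SCCPCmd} argument, and the two configuration equations — removal of a \cde{0}-process and merging of two co-located stores — each strictly decrease the number of objects in a configuration, since no equation ever manufactures a new store or process. These measures are compatible with the AC axioms on configuration union and on the parallel operator \cde{\_||\_}, so the tool should return a termination certificate modulo $B'$. Sort-decreasingness is then immediate by inspection and confirmed by the Church-Rosser Checker: every right-hand side has least sort at or below that of its left-hand side, e.g.\ store merging rewrites a term of sort \cde{Cnf} to one of sort \cde{Obj} $\le$ \cde{Cnf}, and both branches of if-then-else-fi return \cde{SCCPCmd}.

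With termination established, confluence reduces by Newman's lemma to joinability modulo $B'$ of the critical pairs, which I would compute and discharge with the Church-Rosser Checker. Equations with pairwise distinct top symbols give no proper overlaps, and the two configuration equations overlap only on disjoint objects, so they commute trivially; the nontrivial critical pairs arise from AC matching and are where the real work lies. Distributing \cde{replace} over the AC operator \cde{\_||\_} can split a flat parallel term in several ways, but all resulting right-hand sides coincide modulo associativity and commutativity of \cde{\_||\_}. The delicate case — and the one I expect to be the main obstacle — is the self-overlap of the store-merging equation on three co-located stores, which produces \cde{[store, L, (B0 and B1) and B2]} against \cde{[store, L, B0 and (B1 and B2)]} and their commutative variants; these are $B'$-equal, and the pair therefore joinable, precisely when Boolean conjunction \cde{and} is itself associative and commutative. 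I would accordingly confirm that \cde{and} carries those axioms so that the checker certifies joinability; this is also where collapsing \cde{check-sat} to a constant pays off, since the reflective \cde{metaCheck} call would otherwise block both the critical-pair computation and the termination proof of \cde{SMT-UTIL}. Once all critical pairs are certified joinable, confluence follows, and together with sort-decreasingness and operational termination this establishes that the equational subtheory of $\rcal'$ is admissible.
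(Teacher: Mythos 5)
Your proposal matches the paper's proof in approach: the paper discharges the lemma entirely mechanically, running the Maude Termination Tool (with AProVE as external backend) for operational termination and the Church-Rosser Checker for sort-decreasingness and local confluence, exactly as you propose. The additional hand analysis you supply (structural termination measures, the store-merging self-overlap needing \texttt{and} to be assoc/comm, the role of collapsing \texttt{check-sat} to a constant) is consistent with the specification and correctly anticipates why the tools succeed, but the paper itself presents only the tool transcripts.
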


\begin{proof}
  Termination of $\rcal'$ modulo axioms can be proved automatically
  using the Maude Termination Tool within the MFE with the following
  script:
\begin{lstlisting}[basicstyle=\ttfamily\lst@ifdisplaystyle\scriptsize\fi]
  Maude> (select tool MTT .)
  rewrites: 80 in 4ms cpu (4ms real) (20000 rewrites/second)
  The MTT has been set as current tool.
  
  Maude> (select external tool aprove .)
  rewrites: 39 in 4ms cpu (5ms real) (9750 rewrites/second)
  aprove is now the current external tool.
  
  Maude> (ctf SCCP .)
  rewrites: 839665 in 187688ms cpu (191102ms real) (4473 rewrites/second)
  Success: The functional part of module SCCP is terminating.
\end{lstlisting}
Sort-decreasingness and confluence modulo axioms can be proved
automatically using the Church-Rosser Checker within MFE with the
following script:

\begin{lstlisting}[basicstyle=\ttfamily\lst@ifdisplaystyle\scriptsize\fi]
  Maude> (select tool CRC .)
  rewrites: 76 in 4ms cpu (5ms real) (19000 rewrites/second)
  The CRC has been set as current tool.
  
  Maude> (ccr SCCP .)
  rewrites: 35929405 in 45540ms cpu (45539ms real) (788963 rewrites/second)
  Church-Rosser check for SCCP
    All critical pairs have been joined.
    The specification is locally-confluent.
    The module is sort-decreasing.
\end{lstlisting} \qed
\end{proof}

\subsection{Coherence}

Coherence follows from the fact that there are not critical pairs
between the equations and the rules in $\rcal'$: the key observation
is that $\rcal'$, similar to $\rcal$, is topmost.

\begin{lemma}\label{lem.adm.coh}
  The rewrite theory $\rcal'$ is coherent modulo axioms.
\end{lemma}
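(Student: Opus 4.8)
The plan is to reduce coherence modulo the axioms $B$ (associativity, commutativity, and identity of the configuration union $\cde{\_\_}$, together with the commutativity of the equality-test operator) to the joinability of the critical pairs obtained by overlapping the left-hand sides of the rules $R$ with the equations $\overrightarrow{E}$, following Viry's characterization~\cite{viry-coherence-2002}; this is exactly the check performed by the Coherence Checker of the MFE. Concretely, I would show that every rewrite with $\rews_{R/E\cup B}$ decomposes as an $\rews_{E/B}$-normalization followed by an $\rews_{R/B}$-step, which for a topmost theory amounts to showing that normalizing the configuration by $\overrightarrow{E}$ neither destroys nor spuriously creates a rule redex, up to $=_{E/B}$.

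First I would make precise that $\rcal'$ is topmost: the only operator producing the top sort \cde{Sys} is the state constructor $\cde{\{\_\}}$, and \cde{Sys} never occurs in an argument position, so each rule rewrites an entire state $\cde{\{w\}}$ and cannot be nested inside another rule redex. Then I would inspect the two equations against the six rules. The nil-elimination equation $\cde{\{[process,L0,0] X\}} = \cde{\{mt X\}}$ has as its essential (non-variable) pattern a process object carrying the command $\cde 0$, whereas every rule mentions a process object carrying an active command ($\cde{tell}$, $\cde{ask}$, $\cde{\_||\_}$, $\cde{<\_>[\_]}$, $\cde{rec}$, or $\cde{xtr}$); since $\cde 0$ unifies with none of these and cannot unify with a store object either, any co-occurrence of the two redexes places the nil process inside the extension variable $\cde X$ of the rule. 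Analogously, the store-merge equation $\cde{[store,L0,B0][store,L0,B1]} = \cde{[store,L0,B0 and B1]}$ requires two store objects with the same $\cde{Aid}$, while each rule mentions at most one store explicitly, so the second store is again supplied by $\cde X$. In both cases the overlap occurs strictly within the AC extension variable; there is no overlap at a shared non-variable position, and hence no proper critical pair between $R$ and $\overrightarrow{E}$.

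The one point requiring genuine care — and the step I expect to be the main obstacle — is the family of AC extension overlaps induced by the store-merge equation together with the rules that read or write a store, namely $\rlname{tell}$, $\rlname{ask}$, and $\rlname{space}$. Here I must check that merging a duplicate same-$\cde{Aid}$ store (drawn from $\cde X$) commutes, up to $=_{E/B}$, with applying the rule to the explicitly matched store. For $\rlname{tell}$ and $\rlname{space}$ this holds because the resulting store content is a conjunction and $\cde{\_and\_}$ is associative and commutative, so the two orders of merging and posting yield the same canonical store modulo $B$. For the conditional rule $\rlname{ask}$ the subtlety is that its guard $\cde{entails(B0,B1)}$ is evaluated against the matched store $\cde{B0}$; since $\mathbf B$ is a constraint system (Lemma~\ref{lem.rew.constsys}) its entailment is monotone, so $\cde{B0}$ entailing $\cde{B1}$ implies $\cde{B0 and B1'}$ entailing $\cde{B1}$, and normalizing the store before the $\rlname{ask}$-step keeps the rule enabled and yields the same result. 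In $\rcal'$ this obligation is moreover discharged structurally, because the map $\rcal\mapsto\rcal'$ abstracts $\cde{check-sat}$ to a constant, turning the guard into an ordinary Boolean term.

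Finally, I would confirm the analysis mechanically by running the Coherence Checker of the MFE on $\rcal'$, presenting its output in the same style as Lemma~\ref{lem.adm.eqadm}; the expected report is that the (extension) critical pairs between the rules and the equations are all joinable, equivalently that no unjoinable critical pairs remain. Together with the admissibility of the equational subtheory (Lemma~\ref{lem.adm.eqadm}) and the topmost structure, this yields coherence of $\rcal'$ modulo $B$.
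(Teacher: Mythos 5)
Your proposal is correct and ultimately rests on the same argument as the paper: the paper's proof consists precisely of running the MFE Coherence Checker on $\rcal'$ and reporting that all critical pairs are rewritten, with the topmost structure and the absence of proper overlaps between the two equations and the six rules given as the informal justification. Your hand analysis of the AC-extension overlaps (nil-elimination and store-merge against the rules, including the monotonicity point for the $\rlname{ask}$ guard) is a sound elaboration of what the checker verifies, but it does not constitute a different route.
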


\begin{proof}
  Coherence of $\rcal'$ can be proved automatically using the Maude
  Coherence Checker within the MFE with the following script:

\begin{lstlisting}[basicstyle=\ttfamily\lst@ifdisplaystyle\scriptsize\fi]
  Maude> (select tool ChC .) 
  rewrites: 76 in 16ms cpu (14ms real) (4750 rewrites/second)
  The ChC has been set as current tool.
  
  Maude> (cch SCCP .)
  rewrites: 8004447 in 7844ms cpu (7843ms real) (1020454 rewrites/second)
  Coherence checking of SCCP
    All critical pairs have been rewritten and no rewrite with rules can
    happen at non-overlapping positions of equations left-hand sides.
\end{lstlisting} \qed
\end{proof}

\subsection{Admissibility}

The admissibility of $\rcal'$ is a logical consequence of its
equational admissibility and coherence.

\begin{theorem}\label{thm.adm.adm}
  The rewrite theory $\rcal'$ is admissible.
\end{theorem}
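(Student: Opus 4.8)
The plan is to derive admissibility of $\rcal'$ directly from the two preceding lemmas together with the general executability criterion for topmost rewrite theories recalled in Section~\ref{sec.prelim}. Recall that a rewrite theory $\rcal' = (\Sigma, E \uplus B, R)$ is admissible precisely when (i) its underlying equational theory $(\Sigma, E \uplus B)$ is admissible, i.e., the oriented equations $\overrightarrow{E}$ are sort-decreasing, operationally terminating, and confluent modulo $B$; and (ii) the rules $R$ are coherent modulo $B$ with respect to $\overrightarrow{E}$, so that the one-step relation $\rews_{R/E\cup B}$ can be faithfully decomposed into $\rews_{E/B}$-normalization followed by $\rews_{R/B}$-rewriting. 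The theorem is then the logical conjunction of these two facts.

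First I would discharge (i) by invoking Lemma~\ref{lem.adm.eqadm}, which establishes exactly that the equational subtheory of $\rcal'$ is sort-decreasing, operationally terminating, and confluent modulo the structural axioms $B$ (associativity, commutativity, and identity of the configuration-union operator, together with the commutativity of \cde{_~_} on \cde{iNat}). This guarantees the existence of unique $E$-canonical forms $\can{t}{E/B}$ up to $B$-equality, which is what is needed to evaluate both defined function symbols and rule conditions deterministically.

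Next I would discharge (ii) by invoking Lemma~\ref{lem.adm.coh}, which states that $\rcal'$ is coherent modulo $B$. As observed in the proof of that lemma, coherence is immediate from the fact that $\rcal'$ is topmost: every rule rewrites a whole state term of sort \cde{Sys}, so no rule can overlap an equation at a proper nonvariable subposition, and the mechanical check confirms that all critical pairs between rules and equations are joinable and that no rule applies at a non-overlapping position of an equation's left-hand side. The only remaining point is that the single conditional rule \rlname{ask} has an admissible condition: once a matching substitution modulo $B$ is applied, its condition \cde{entails(B0, B1)} is a ground term of sort \cde{Bool} that the confluent, terminating equations reduce to either \cde{true} or \cde{false}, so the condition is decidable by evaluation; and since matching modulo the chosen axioms is finitary, each rewrite step is effectively computable.

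Combining (i) and (ii) through the standard executability criterion for topmost rewrite theories (Section~\ref{sec.prelim} and~\cite{clavel-maudebook-2007,viry-coherence-2002}) yields that $\rcal'$ is admissible. The proof carries essentially no technical difficulty of its own: the genuine work has already been discharged mechanically in Lemmas~\ref{lem.adm.eqadm} and~\ref{lem.adm.coh}, and the statement is their direct consequence. If any step is delicate, it is the verification that rule conditions are \emph{evaluable} rather than merely semantically decidable; but this is transparent here, since conditions reduce to Boolean normal forms under the already-verified confluent and terminating equations.
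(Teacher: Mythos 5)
Your proof is correct and follows exactly the paper's route: Theorem~\ref{thm.adm.adm} is obtained as the direct conjunction of Lemma~\ref{lem.adm.eqadm} (equational admissibility) and Lemma~\ref{lem.adm.coh} (coherence), which is precisely what the paper's one-line proof states. Your additional remarks on the evaluability of the \rlname{ask} condition and the role of the topmost structure are sound elaborations of the same argument rather than a different approach.
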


\begin{proof}
  It follows from lemmas~\ref{lem.adm.eqadm} and~\ref{lem.adm.coh}.  \qed
\end{proof}

Finally, the admissibility of $\rcal$ can be asserted under the
assumption that Maude's \cde{META-LEVEL} is admissible. In particular,
it is required that the meta-level functionality used for querying the
SMT solver is correct.

\begin{corollary}\label{cor.adm.adm}
  If Maude's \cde{META-LEVEL} is admissible, then $\rcal$ is admissible.
\end{corollary}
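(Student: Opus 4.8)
The plan is to transfer admissibility from $\rcal'$ to $\rcal$ by arguing that the map $\rcal \mapsto \rcal'$ modifies only components that are either admissibility-preserving isomorphisms or whose admissibility is subsumed by the hypothesis on \cde{META-LEVEL}. By Theorem~\ref{thm.adm.adm}, $\rcal'$ is admissible, so it suffices to revisit the three changes effected by the map and to show that reverting them preserves sort-decreasingness, operational termination, confluence, and coherence. First I would dispatch the two purely syntactic changes: the renaming of the Boolean sort (\cde{Bool} in \cde{TRUTH-VALUE} versus \cde{iBool} in \cde{ITRUTH-VALUE}) and the replacement of the custom if-then-else symbol in \cde{SCCP-SYNTAX} by Maude's native one in the definition of \cde{replace}. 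Both are bijective renamings of sorts and operators that leave the rule/equation structure, the matching modulo the structural axioms, and the least-sort assignments intact, so each admissibility property already certified for $\rcal'$ is inherited verbatim for the corresponding fragment of $\rcal$.

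The only essential difference is item~2 of the map: in $\rcal'$ the function \cde{check-sat} is equationally fixed to a constant value, whereas in $\rcal$ it is defined by \cde{eq check-sat(B) = metaCheck(['INTEGER], upTerm(B))}, which descends into \cde{META-LEVEL}. As the text observes, reinstating this dependency amounts precisely to replacing the constant by the \cde{metaCheck} call. I would therefore reduce the equational admissibility of $\rcal$ to the admissibility of the fragment defining \cde{check-sat}, \cde{check-unsat}, and \cde{entails}. Under the hypothesis that \cde{META-LEVEL} is admissible --- in particular that the descent function \cde{metaCheck} together with \cde{upTerm} is sort-decreasing, operationally terminating, and confluent --- these three equations are left-linear, non-overlapping, and sort-decreasing (their result sort being \cde{Bool}), and terminating once \cde{metaCheck} is; hence the enlarged equational subtheory of $\rcal$ remains sort-decreasing, operationally terminating, and confluent modulo the structural axioms.

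Finally I would lift coherence. The key observation from Lemma~\ref{lem.adm.coh} is that $\rcal$, like $\rcal'$, is topmost: all rewrite rules act at the top sort \cde{Sys}, whereas the reinstated meta-level computation is confined to the equational evaluation of the condition \cde{entails(B0, B1)} in rule $\rlname{ask}$, which takes place entirely at the sorts \cde{Boolean} and \cde{Bool} and never at \cde{Sys}. Consequently no new overlaps between rules and equation left-hand sides appear at non-top positions, and the coherence argument of Lemma~\ref{lem.adm.coh} carries over unchanged. Combining the equational admissibility and coherence of $\rcal$ then yields its admissibility, exactly as in Theorem~\ref{thm.adm.adm}. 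The main obstacle is the one encapsulated by the hypothesis: operational termination and confluence of the \cde{check-sat} equation hinge on the termination and determinacy of the external SMT query performed by \cde{metaCheck}, which lies outside the reach of the MFE tools and must therefore be assumed rather than mechanically certified.
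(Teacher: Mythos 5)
Your proposal is correct and follows essentially the same route as the paper's own proof sketch: both transfer admissibility from $\rcal'$ by noting that the sort renaming and the custom if-then-else are harmless syntactic changes, that the reinstated \cde{metaCheck} dependency is covered by the assumed admissibility of \cde{META-LEVEL}, and that coherence survives because no new critical pairs arise between the \cde{META-LEVEL} equations and the (topmost) rules of \cde{SCCP}. Your version merely spells out in more detail what the paper leaves as a sketch.
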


\begin{proof}[Sketch]
  The equational admissibility of $\rcal$ follows from the
  admissibility of $\rcal'$ by observing that:
  \begin{itemize}
    \item the equational subtheory of $\rcal'$ is basically the same
      one of $\rcal$ but with some sorts being renamed, introducing a
      new if-then-else-fi construct (with the same evaluation strategy
      that Maude's built-in equivalent).
    \item by assumption, Maude's \cde{META-LEVEL} is admissible.
  \end{itemize}
  The coherence of $\rcal$ follows from the coherence of $\rcal'$ and
  by the fact that there are no critical pairs between the equations
  in Maude's \cde{META-LEVEL} module and the rules in the \cde{SCCP}
  module. \qed
\end{proof}
\section{Symbolic Reachability Analysis}
\label{sec.reach}

The goal of this section is to explain how the rewriting logic
semantics $\rcal$ of $\SCCPwE$ and rewriting modulo SMT can be used as
an automatic mechanism for solving existential reachability goals in
the initial model $\tcal_\rcal$. This approach can be especially
useful for symbolically proving or disproving safety properties of
$\tcal_\rcal$ such as fault-tolerance, consistency, and privacy.  The
approach presented in this section mainly relies on Maude's
\cde{search} command, but it can be easily extended to be useful in
the more general setting of Maude's LTL Model Checker.

In this section, a state in $\rcal$ with $n$ stores is represented as
a term $t(\phi_1,\ldots,\phi_n)$, where each $\phi_i$ denotes the
contents of a store. Given two state terms $t(\phi_1,\ldots,\phi_n)$
and $u(\psi_1,\ldots,\psi_k)$, the existential reachability question
of whether there is a ground substitution $\theta$ and concrete states
$t'\in t(\phi_1,\ldots,\phi_n)\theta$ and $u'\in
u(\psi_1,\ldots,\psi_k)\theta$ such that $t'\stackrel{*}{\rews}_\rcal
u'$ is of special interest for many safety properties. For example,
$u'$ can represent a `bad state' and the goal is to know if reaching
such a state is possible.



\subsection{Fault-tolerance and Consistency}

Fault tolerance is the property that ensures a system to continue
operating properly in the event of the failure; consistency means that
a local failure does not propagate to the entire system. In $\rcal$,
this means that if a store becomes inconsistent, it is not the case
that such an inconsistency spreads to the entire system. Of course,
inconsistencies can appear in other stores due to some unrelated
reasons.

Finding an inconsistent store can be logically formulated by the
following model-theoretic satisfaction:
\[\tcal_\rcal \MODELS (\exists \overrightarrow{x}, i\in[1..k])\; t(\phi_1,\ldots,\phi_n) \stackrel{*}{\rews}_\rcal u(\psi_1,\ldots,\psi_k) \;\land\; \mth{unsat}(\psi_i).\]
Answering this query in the positive would mean that from some initial
state satisfying the pattern $t(\phi_1,\ldots,\phi_n)$, there is a
state in which a store becomes inconsistent.

Such queries can be easily implemented with the help of $\rcal$ and
the rewriting modulo SMT approach by using Maude's \cde{search}
command.  As an example, consider the following \cde{search} command:

  \begin{maude}
search in SCCP :
    { [store,   root, true] 
      [store, 0 . root, X:Integer === 25]
      [store, s 0 . root, true]
      [store, 0 . s 0 . root, Y:Integer < 5] 
      [process, 0 . root, xtr(0,< s 0 >[tell (Z:Integer >= 10) || 
               < 0 >[ask Y:Integer < 20 -> xtr(0,xtr(s 0,
               < 0 >[< s s 0 >[tell (W:Integer < Y:Integer)]]))]])] }
=>* { [store, A:Aid, B:Boolean, B0:Boolean] C:Cnf } 
    such that check-unsat(B0:Boolean) .
  \end{maude}

\noindent Note that a store is inconsistent if it is unsatisfiable,
thereby checking whether a store is inconsistent is accomplished with
the function \cde{check-unsat}. The aforementioned command generates
the following output:
  
\begin{maude}
No solution.
states: 19  rewrites: 1103 in 104ms cpu (101ms real) 
        (10605 rewrites/second)
\end{maude}

\noindent The command does not find an inconsistent store in the 19
reachable states. However, it is possible to make a store inconsistent
by adding inconsistent information, for example \cde{$Z \geq 10$} and
\cde{$Z = 9$} by changing the process \cde{tell(Z >= 10)} to
\cde{tell(Z >= 10) || tell(Z === 9)}:
  
\begin{maude}
Solution 1 (state 14)
states: 15  rewrites: 678 in 76ms cpu (76ms real)
        (8921 rewrites/second)
C:Cnf --> [process,s 0 . root,< 0 >[ask Y:Integer < 20 -> 
           xtr(0,xtr(s 0,< 0 >[< s s 0 >[tell (W:Integer < Y:Integer)]
           ]))]] 
          [store,root,true] 
          [store,0 . root,X:Integer === (25).Integer] 
          [store,0 . s 0 . root,Y:Integer < 5]
A:Aid --> s 0 . root
B0 --> Z:Integer >= (10).Integer and Z:Integer === (9).Integer

...

Solution 16 (state 54)
states: 55  rewrites: 3160 in 300ms cpu (299ms real) 
        (10533 rewrites/second)
C:Cnf --> [store,root,true] 
          [store,0 . root,X:Integer === (25).Integer] 
          [store,0 . s 0 . root,Y:Integer < 5] 
          [store,s s 0 . 0 . root,W:Integer < Y:Integer]
A:Aid --> s 0 . root
B0 --> Z:Integer === (9).Integer and Z:Integer >= (10).Integer

No more solutions.
states: 55  rewrites: 3169 in 300ms cpu (301ms real) 
        (10563 rewrites/second)
\end{maude}

\noindent There are 55 reachable states (from the initial state) and
16 of them have an inconsistent store. Note that, even though the
inconsistency appears for the first time in state 14, the system
evolves until no more processes can be performed. It is possible to
verify that there are states with consistent and inconsistent stores
at the same time by slightly modifying the command.

\subsection{Knowledge Inference}

Knowledge inference refers to acquiring new knowledge from existing
facts. In the setting of $\rcal$, this means that from a given initial
state an agent, at some point, has gained enough information to infer
--~from the rules of first-order logic~-- new facts:
\[\tcal_\rcal \MODELS (\exists \overrightarrow{x}, i\in[1..k])\; t(\phi_1,\ldots,\phi_n) \stackrel{*}{\rews}_\rcal u(\psi_1,\ldots,\psi_k) \;\land\; \psi_i \Rightarrow \tau,\]
where $\tau$ is the formula representing the new fact and
`$\Rightarrow$' denotes logical implication. Answering the above query
in the positive, means that from some initial state satisfying the
pattern $t(\phi_1,\ldots,\phi_n)$ there is at least an agent as part
of a configuration satisfying the pattern $u(\psi_1,\ldots,\psi_k)$
that has enough information to infer $\tau$. 

Such queries can be easily implemented with the help of $\rcal$ and
the rewriting modulo SMT approach by using Maude's \cde{search}
command.  As an example, consider the following \cde{search} command:

\begin{maude}
search in SCCP :
    { [store,   root, true] 
      [store, 0 . root, X:Integer === 25]
      [store, s 0 . root, true]
      [store, 0 . s 0 . root, Y:Integer < 5] 
      [process, 0 . root, xtr(0,< s 0 >[tell (Z:Integer >= 10) || 
               < 0 >[ask Y:Integer < 20 -> xtr(0,xtr(s 0,
               < 0 >[< s s 0 >[tell (W:Integer < Y:Integer)]]))]])] }  
    =>*
    { [store, A:Aid, B0:Boolean] C:Cnf }
    such that entails(B0:Boolean, Y:Integer > 9) .
\end{maude}

\noindent It checks if there is a state, reachable from the given
initial state, in which some store logically implies $Y > 9$.  The
output below shows that such a state does not exist:

\begin{maude}
No solution.
states: 19  rewrites: 1535 in 116ms cpu (116ms real) 
        (13232 rewrites/second)
\end{maude}

\noindent However, if the condition in the command is changed to find
a store logically implying $Z > 9$, the query finds 8 solutions:

\begin{maude}
Solution 1 (state 4)
states: 5  rewrites: 318 in 28ms cpu (27ms real) 
        (11357 rewrites/second)
C:Cnf --> [process,s 0 . root,< 0 >[ask Y:Integer < 20 -> 
           xtr(0,xtr(s 0,< 0 >[< s s 0 >[tell (W:Integer < Y:Integer)]
           ]))]] 
          [store,root,true] 
          [store,0 . root,X:Integer === (25).Integer] 
          [store,0 . s 0 . root,Y:Integer < 5]
A:Aid --> s 0 . root
B0:Boolean --> Z:Integer >= (10).Integer

...

Solution 8 (state 18)
states: 19  rewrites: 1520 in 120ms cpu (120ms real) 
        (12666 rewrites/second)
C:Cnf --> [store,root,true] 
          [store,0 . root,X:Integer === (25).Integer] 
          [store,0 . s 0 . root,Y:Integer < 5] 
          [store,s s 0 . 0 . root,W:Integer < Y:Integer]
A:Aid --> s 0 . root
B0:Boolean --> Z:Integer >= (10).Integer

No more solutions.
states: 19  rewrites: 1535 in 124ms cpu (121ms real) 
        (12379 rewrites/second)
\end{maude}

\subsection{Same Knowledge}

Formally, this type of reachability query can be specified as follows for $\rcal$:
\[\tcal_\rcal \MODELS (\exists \overrightarrow{x}, i,j\in[1..k])\; t(\phi_1,\ldots,\phi_n) \stackrel{*}{\rews}_\rcal u(\psi_1,\ldots,\psi_k) \;\land\; \psi_i \Leftrightarrow \psi_j \;\land\; i\neq j ,\]
where $\Leftrightarrow$ denotes logical equivalence. Answering this
query in the positive means that two stores have gained the same
knowledge.

As an example, consider the following Maude \cde{search} command, querying
for two stores having the same information when they are non-empty:

  \begin{maude}
search in SCCP :
    { [store,   root, true] 
      [store, 0 . root, X:Integer === 25]
      [store, s 0 . root, true]
      [store, 0 . s 0 . root, Y:Integer < 5] 
      [process, 0 . root, xtr(0,< s 0 >[tell (Z:Integer >= 10) || 
               < 0 >[ask Y:Integer < 20 -> xtr(0,xtr(s 0,
               < 0 >[< s s 0 >[tell (W:Integer < Y:Integer)]]))]])] }
=>* { [store, A0:Aid, C0:Boolean] 
      [store, A1:Aid, C1:Boolean] C:Cnf } 
    such that entails(C0:Boolean, C1:Boolean) /\ 
              entails(C1:Boolean, C0:Boolean) /\ 
              C1:Boolean =/= true .  
  \end{maude}

\noindent Using Figure~\ref{fig.sccp.sccp-final} is easy to check that
it is never the case that there are two stores with the same
information, which agrees with the output shown below.

\begin{maude}
No solution.
states: 19  rewrites: 3899 in 472ms cpu (471ms real) 
        (8260 rewrites/second)  	
\end{maude}
  
Replacing the process \cde{tell(W < Y)} with \cde{tell(Z > 9)}
leads to a state where two stores have the same information, namely,
\cde{tell(Z > 9)} and \cde{tell(Z >= 10)}. The output for the
\cde{search} command in this case is the following:
  
\begin{maude}
Solution 1 (state 18)
states: 19  rewrites: 3862 in 464ms cpu (462ms real) 
        (8323 rewrites/second)
C:Cnf --> [store,root,true] 
          [store,0 . root,X:Integer === (25).Integer] 
          [store,0 . s 0 . root,Y:Integer < 5]
A0:Aid --> s 0 . root
C0:Boolean --> Z:Integer >= (10).Integer
A1:Aid --> s s 0 . 0 . root
C1:Boolean --> Z:Integer > (9).Integer

Solution 2 (state 18)
states: 19  rewrites: 3914 in 472ms cpu (471ms real) 
        (8292 rewrites/second)
C:Cnf --> [store,root,true] 
          [store,0 . root,X:Integer === (25).Integer] 
          [store,0 . s 0 . root,Y:Integer < 5]
A0:Aid --> s s 0 . 0 . root
C0:Boolean --> Z:Integer > (9).Integer
A1:Aid --> s 0 . root
C1:Boolean --> Z:Integer >= (10).Integer

No more solutions.
states: 19  rewrites: 3917 in 472ms cpu (471ms real) 
        (8298 rewrites/second)
  \end{maude}
\section{An $\SCCPwE$-based Programming Environment Prototype}
\label{sec.lang}

This section present a programming environment prototype based on
$\SCCPwE$ and whose executable semantics is given by $\rcal$. The goal
of this prototype is to provide programmers with the theoretical
fundamentals and expressiveness of $\SCCPwE$, an easy syntax, and a
front-end to interact with $\rcal$.

\subsection{Syntax of the Programming Language}
 
The syntax of the programming language is presented using the EBNF
notation in Figure~\ref{fig.lang.ebnf}. A program $\langle system
\rangle$ has two sections: the header $\langle variables \rangle$ and
the body $\langle body \rangle$. The former contains the variable
declarations by a name $\langle id \rangle$ and a type (viz.,
\cde{Bool} and \cde{int}), it is not possible to declare two variables
with the same name and different type. The latter contains an unsorted
list of agents and processes, one per line, between the keywords
\lit{begin} and \lit{end}. Note that each line describing an agent or
a process shall end with the character \lit{.}.  Since the purpose is
to provide an easier way to write $\SCCPwE$ systems, it is possible to
declare Boolean (as \cde{Bool}) and integer (as \cde{Int}) variables
in order to make agents and processes expressions simpler.

\begin{figure}[thbp] 
  \centering
  \setlength{\grammarindent}{10em} 
  \begin{grammar}
   <system> ::= <variables>* <body> 
    
   <variables> ::= \lit{var} <idList>+ (\lit{Int} | \lit{Bool})
    
   <idList> ::= <id> (\lit{,} <id>)*
   
   <body> ::= \lit{begin} <line>+ \lit{end}
   
   <line> ::= ( <agent> | <process> ) \lit{.}
   
   <agent> ::= <location> \lit{;} <constraint>
   
   <process> ::= \lit{tell(} <constraint>  \lit{)}  \hfill (tell)
   \alt \lit{ask} <constraint> \lit{->}  <process> \hfill (ask)
   \alt <process> \lit{||} <process>				 \hfill (parallel)
   \alt \lit{[} <process> \lit{]_} <integer>		 \hfill (space)
   \alt \lit{x(} <process> \lit{)_} <integer>		 \hfill (extrusion)
   \alt \lit{v(} <integer> \lit{)}					 \hfill (variable)
   \alt \lit{r(} <integer> \lit{,} <process> \lit{)} \hfill (recursion)
   
   <constraint> ::= <boolean>
   \alt <id>
   \alt <expression>
   \alt <constraint> \lit{and} <constraint>
   
   <location> ::= (<integer> \lit{.})* \lit{root}
    
   <expression> ::= <id> <operator> ( <id> | <integer> )
   
   <operator> ::= `>' | `<' | `=' | `=/=' | `>=' | `<=' 
   
   <boolean> ::= \lit{true} | \lit{false}
   
   <integer> ::= [0-9]+
   
   <id> ::= [A-Z] [A-Z0-9]*
  \end{grammar}
  \caption{Syntax of the programming language prototype.}
  \label{fig.lang.ebnf}
\end{figure}

\subsection{Examples}

The Example~\ref{sec.sccp.example} can be represented in the programming
language as follows:

\begin{sccp}
var W, X, Y, Z Int
begin
root ; true .
0 . root ; X = 25 .
1 . root ; true .
0 . 1 . root ; Y < 5 .
[ x( [tell(Z >= 10) || [ask Y < 20 -> x( x( [ [tell(W < Y)]_2 ]_0 
)_1 )_0 ]_0 ]_1 )_0 ]_0 .
end
\end{sccp}

\noindent Note that every variable used in a constraint is
\textit{declared} or \textit{defined} before the keyword \cde{begin},
regardless of its type (\cde{Bool} or \cde{Int}). Since $\SCCPwE$
configurations are represented in Maude as a \textit{soup} of agents
and processes, the order in which these are defined does not matter.
However, it is a good idea to define agents first for the ease of
reading and self-documentation.

\begin{example}

Consider the following program,

\begin{sccp}
var B0, B1 Bool
var X, C Int
var Y, B Int
begin
ask true -> tell(X >= 5) .
[ [tell(B0)]_1 || tell(Y < X) ]_1 .
ask X > 1 -> tell(B1) .
[tell(X >= 5)]_2 .
ask B1 -> [ [tell(C =/= 5) ]_1 ]_1 .
[ask Y < 3 -> r(1,v(1) || tell(false)) ]_1 .
end
\end{sccp}

\begin{figure}[htbp] 
	\centering
	\resizebox{6cm}{!}{%
		\begin{tikzpicture}[node distance=2cm,
		every node/.style={fill=white, font=\sffamily}, align=center]
		\node (root)  	[store]                
						{\textit{(root)}\\X >= 5 \\ $\sqcup$ B1};
		\node (1)		[store, below of=root, left of=root]
						{\textit{(1)}\\Y < X};
		\node (2)  		[store, below of=root, right of=root]  
						{\textit{(2)}\\X >= 5};
		\node (11)   	[store, below of=1] 
						{\textit{(1)}\\C =/= 5 \\ $\sqcup$ B0};
		\draw[->]   (root) -- (1);
		\draw[->]   (root) -- (2);
		\draw[->]      (1) -- (11);      
		\end{tikzpicture}}
	\caption{$\SCCPwE$ System example.}
	\label{fig.lang.state}
\end{figure}
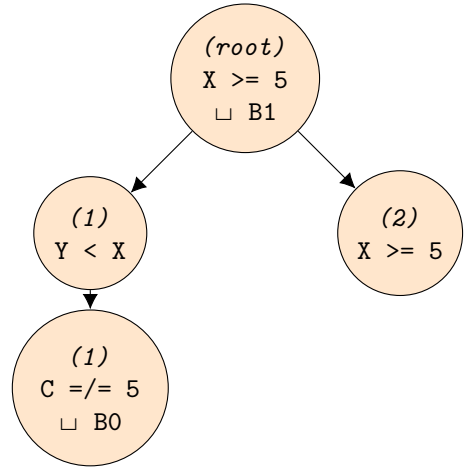

This program shall be reduced to the final state showed in
Figure~\ref{fig.lang.state}. There are some points to recall:

\begin{itemize}
  \item it is not necessary to include agents in the description of the 
  system, because these can come of processes,
  \item the process $\cde{[ask Y < 3 -> r(1,v(1) || tell(false)) ]_1}$ 
  shall not be reduced due to the fact that $\cde{Y < 3}$ is not 
  entailed by $\cde{Y < X}$, i.e.,  
  $\cde{Y < 3} \not\sqsubseteq \cde{Y < X}$,
  \item the space of an agent can be represented by a boolean variable 
  ($\cde{Bool}$).
\end{itemize}

The provided programming environment is a graphical tool where a
programmer can execute $\SCCPwE$ programs using the aforementioned
programming language. This tool is developed with Python 3 and
tkinter. Figure~\ref{fig.lang.tool} shows an example of the tool's GUI:
it has a main window~\ref{fig.lang.tool-initial} where programs can be
written. Once a valid program is written and executed, the final state
of the $\SCCPwE$ system is shown on an auxiliary
window~\ref{fig.lang.tool-final}.

\begin{figure}[htbp] 
	\centering
	\begin{subfigure}[b]{0.83\textwidth}
		\includegraphics[width=4in]{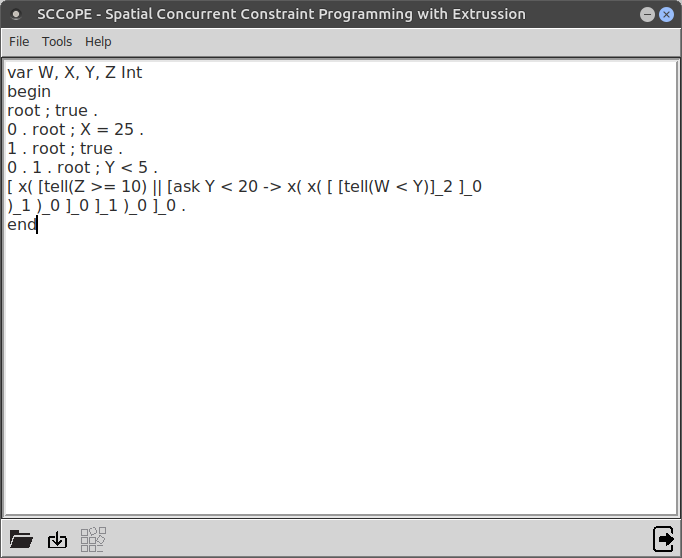} 
		\caption{tool's GUI and $\SCCPwE$ program}
		\label{fig.lang.tool-initial}
	\end{subfigure}%
	
	\begin{subfigure}[b]{0.83\textwidth}
		\includegraphics[width=4in]{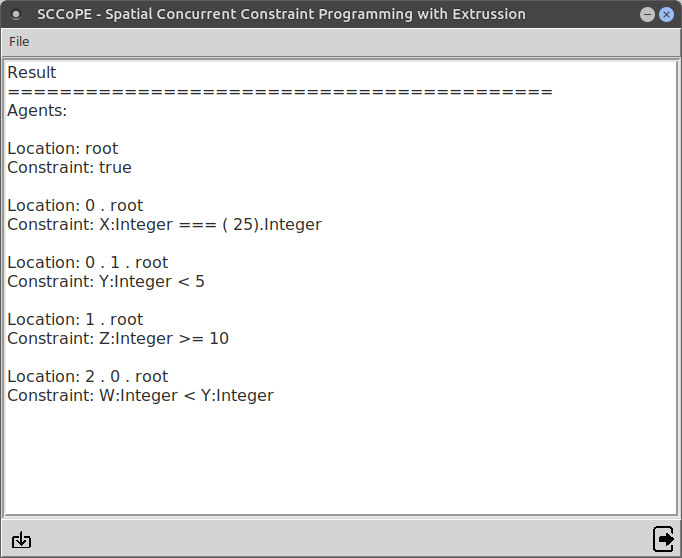} 
		\caption{final state and output format}
		\label{fig.lang.tool-final}
	\end{subfigure}	
	\caption{Programming environment for the $\SCCPwE$-based programming
    language.}
	\label{fig.lang.tool}
\end{figure}
	
\end{example}
\section{Related Work and Concluding Remarks}
\label{sec.concl}

Rewrite-based executable semantics of process-based formalisms have
been proposed before in the realm of rewriting logic and Maude (see,
e.g.,~\cite{degano-ccs-2002,braga-modularsem-2005,verdejo-twocasessem-2005}).
They are part of a larger set of formal interpreters developed over
the years that have helped in exploring the features of rewriting
logic as a semantic framework.  The work presented here is a
significant extension of the preliminary work initiated
in~\cite{romero-bsc-2017}. In particular, this work adds support for
the recursion and extrusion primitives present in $\SCCPwE$. The
related work in~\cite{barco-kstores-2012}, presents an interpreter for
epistemic and spatial modalities in Prolog.

This paper has presented a symbolic rewriting logic semantics --~based
on the rewriting modulo SMT approach~-- of
$\SCCPwE$~\cite{knight-sccp-2012,barco-kstores-2012,guzman-sccp-2016}:
a recent extension of the $\CCP$
model~\cite{saraswat-ccp-1990,saraswat-ccpbook-1993,saraswat-ccpsem-1991}
with spaces and extrusion. The executable rewriting logic semantics
follows the structural operational semantics of $\SCCPwE$ and
implements the underlying constraint system using SMT-solving
technology. As such, it offers a complete and sound decision procedure
for symbolic reachability analysis in $\SCCPwE$ for existential
formulas, that can be automatically mechanized in Maude. Several
examples have been used to illustrate the main concepts and a
programming environment prototype has been introduced. The novel idea
of combining term rewriting and constrained data structures, as it is
the case in the rewriting modulo SMT
approach~\cite{rocha-rewsmtjlamp-2017}, is an active area of
research. Ultimately, this approach strengthens with symbolic support
the battery of techniques that can now be used to implement formal and
symbolic executable semantics of languages in Maude.

As described in~\cite{olarte-emergmodels-2013} there are several
extensions and applications of the $\CCP$ model, e.g., the epistemic
and spatial modalities, mobile behavior, linear and soft modalities,
probabilistic behavior, and timed concurrent constraint
programming. As future work, extensions of $\SCCPwE$ with
probabilities and time are a promising line of research. Moreover,
providing the rewriting logic semantics of such extensions can lead to
interesting case studies for PMaude~\cite{agha-pmaude-2006} and
Real-Time Maude~\cite{olveczky-realtimemaude-2000}. Finally, new case
studies with applications to emergent systems such as cloud computing
and social networks should be pursued with the help of the rewriting
logic semantics presented in this work.

\bibliographystyle{abbrv}
\bibliography{camilo,miguel}

\newpage

\section{SCCP specification}
\label{sec.sccp.maude}

This appendix includes the $\SCCP$ specification in Maude explained in
Section~\ref{sec.rew}, including the functional modules $\cde{INAT}$,
$\cde{SMT-UTIL}$, $\cde{AGENT-ID}$, $\cde{SCCP-SYNTAX}$ and
$\cde{SCCP-STATE}$, and the system module $\cde{SCCP}$.

\subsection{prelude-short.maude}

\begin{maude}
---- Natural numbers
fmod INAT is
  protecting TRUTH-VALUE .
  sorts iZero iNzNat iNat .
  subsort iZero iNzNat < iNat .
  
  op 0 : -> iZero [ctor] .
  op s_ : iNat -> iNzNat [ctor] .
  
  --- equality enrichment
  op _~_ : iNat iNat -> Bool [comm] .
  eq 0 ~ 0 
   = true .
  eq s N:iNat ~ 0 
   = false .
  eq N:iNat ~ N:iNat
   = true .
  eq s N:iNat ~ s M:iNat
   = N:iNat ~ M:iNat .
endfm
\end{maude}

\subsection{smt-util.maude}

\begin{maude}
load smt.maude

fmod SMT-UTIL is
  inc INTEGER .
  pr CONVERSION .
  pr META-LEVEL .
  
  op check-sat : Boolean -> Bool .
  op check-unsat : Boolean -> Bool .
  op entails : Boolean Boolean -> Bool .
  eq check-sat(B:Boolean)
   = metaCheck(['INTEGER], upTerm(B:Boolean)) .
  eq check-unsat(B:Boolean)
   = not(check-sat(B:Boolean)) .
  eq entails(C1:Boolean, C2:Boolean)
   = check-unsat(C1:Boolean and not(C2:Boolean)) .

  --- some Boolean identities
  eq B:Boolean and true
   = B:Boolean .
  eq B:Boolean and false
   = false .
  eq B:Boolean or true
   = true .
  eq B:Boolean or false
   = B:Boolean .
  eq true and B:Boolean
   = B:Boolean .
  eq false and B:Boolean
   = false .
  eq true or B:Boolean
   = true .
  eq false or B:Boolean
   = B:Boolean .
  eq not((true).Boolean)
   = (false).Boolean .
  eq not((false).Boolean)
   = (true).Boolean .
endfm
\end{maude}

\subsection{agent-id.maude}

\begin{maude}
load smt-util.maude
load prelude-short.maude

--- agent identifier
fmod AGENT-ID is
  pr EXT-BOOL .
  pr INAT .
  sort Aid .
  
  op root : -> Aid .
  op _._ : iNat Aid -> Aid .
  
  vars L L0 L1 L2 : Aid .
  vars N N0 N1 N2 : iNat .
  
  --- auxiliary operations
  op is-prefix? : Aid Aid -> Bool .
  eq is-prefix?(root, L) 
   = true . 
  eq is-prefix?(N . L, root) 
   = false .
  eq is-prefix?(N0 . L0, N1 . L1) 
   = (N0 . L0  ~ N1 . L1) or-else is-prefix?(N0 . L0, L1) .
  
  --- equality enrichment
  op _~_ : Aid Aid -> Bool [comm] .
  eq root ~ root
   = true .
  eq root ~ N . L
   = false . 
  eq L ~ L
   = true .
  eq N . L ~ N0 . L0
   = (N ~ N0) and-then L ~ L0 .
endfm
\end{maude}

\subsection{sccp.maude}

\begin{maude}
load agent-id.maude

--- commands syntax
fmod SCCP-SYNTAX is
  pr INTEGER .
  pr AGENT-ID .
  sort SCCPCmd .
  
  op 0 : -> SCCPCmd . 
  op tell_ : Boolean -> SCCPCmd . 
  op ask_->_ : Boolean SCCPCmd -> SCCPCmd .
  op _||_ : SCCPCmd SCCPCmd -> SCCPCmd [assoc comm gather (e E) ] .
  op <_>[_] : iNat SCCPCmd -> SCCPCmd .
  op rec(_,_) : iNat SCCPCmd -> SCCPCmd .
  op xtr(_,_) : iNat SCCPCmd -> SCCPCmd .
  op v(_) : iNat -> SCCPCmd .
endfm

--- state syntax
fmod SCCP-STATE is
  pr SCCP-SYNTAX .
  
  sorts Cid Obj Cnf Sys .
  subsorts Obj < Cnf .
  ops store process : -> Cid .
  op [_,_,_] : Cid Aid Boolean -> Obj [ctor] .
  op [_,_,_] : Cid Aid SCCPCmd -> Obj [ctor] .
  op mt : -> Cnf [ctor] .
  op __ : Cnf Cnf -> Cnf [ctor assoc comm id: mt] .
  op {_} : Cnf -> Sys [ctor] .
  
  vars L L0 L1 : Aid .
  vars N N0 N1 : iNat .
  vars B B0 B1 : Boolean .
  vars C C0 C1 : SCCPCmd .
  vars X Y     : Cnf .

  --- auxiliary operations
  op replace : SCCPCmd iNat SCCPCmd -> SCCPCmd .
  eq replace( 0, N, C ) 
   = 0 . 
  eq replace( tell B, N, C ) 
   = tell B .
  eq replace( ask B -> C0, N, C ) 
   = ask B -> replace( C0, N, C ) .
  eq replace( C0 || C1, N, C )
   = replace( C0, N, C ) || replace( C1, N, C ) .
  eq replace( < N0 >[ C0 ], N, C )
   = < N0 >[ replace( C0, N, C ) ] .
  eq replace( rec( N0, C0 ), N, C ) 
   = rec( N0, C0 ) .
  eq replace( xtr( N0, C0 ), N, C )  
   = xtr( N0, replace( C0, N, C ) ) .
  eq replace( v(N0), N, C ) 
   = if (N0 ~ N) then C else v(N0) fi .
  
  op exists-store? : Cnf Aid -> Bool .
  eq exists-store?(mt, L)
   = false .
  eq exists-store?( [process, L0, C0] X, L)
   = exists-store?(X,L) .
  eq exists-store?( [store, L0, B0] X, L)
   = (L0 ~ L) or-else exists-store?(X, L) . 
endfm

--- transitions
mod SCCP is
  inc SCCP-STATE .
  pr SMT-UTIL .
  
  vars N N0 N1 : iNat .
  vars L L0 L1 : Aid .
  vars B B0 B1 : Boolean .
  vars C C0 C1 : SCCPCmd .
  vars X : Cnf .
  
  --- non-observable concurrent transitions
  eq { [ process, L0, 0 ] X }
   = { mt X } .
  eq [ store, L0, B0 ] [ store, L0, B1 ]
   = [ store, L0, B0 and B1 ] .
  
  --- observable concurrent transitions
  rl [tell] :
     { [ store, L0, B0 ] [process, L0, tell B1 ] X }
  => { [ store, L0, B0 and B1 ] [ process, L0, 0 ] X } .

 crl [ask] :
     { [ store, L0, B0 ] [ process, L0, ask B1  -> C1 ] X }
  => { [ store, L0, B0 ] [ process, L0, C1 ] X } 
  if entails(B0, B1) .

  rl [parallel] :
     { [ process, L0, C0 || C1 ] X }
  => { [ process, L0, C0 ] [ process, L0, C1 ] X } .
  
  rl [space] :
     { [ store, L0, B0 ] [ process, L0, < N0 >[ C0 ] ] X } 
  => { [ store, L0, B0 ] [ process, L0, 0 ] 
       [ store, N0 . L0, true ] [ process, N0 . L0, C0 ] X } .
  
  rl [recursion]:
     { [ process, L0, rec( N0, C0 ) ] X }
  => { [ process, L0, replace( C0, N0, rec( N0, C0 ) ) ] X } .
  
  rl [extrussion]:
     { [ process, N0 . L0, xtr( N0, C0 ) ] X }
  => { [ process, N0 . L0, 0 ] [ process, L0, C0 ] X } .
endm
\end{maude}

\section{SCCP test specification}
\label{sec.sccp.mfe}

This appendix includes the $\SCCP$ specification in Maude used to 
execute the mechanical proofs of admissibility explained in 
Section~\ref{sec.adm}, including the functional modules 
$\cde{IBOOL}$, $\cde{INAT}$, $\cde{BOOLEAN}$, $\cde{INTEGER}$,
$\cde{SMT-UTIL}$, $\cde{AGENT-ID}$, $\cde{SCCP-SYNTAX}$ and 
$\cde{SCCP-STATE}$, and the system module $\cde{SCCP}$.

\subsection{prelude-short.maude}

\begin{maude}
fmod ITRUTH-VALUE is
  sort iBool .
  op true : -> iBool [ctor] .
  op false : -> iBool [ctor] .
endfm

fmod IBOOL-OPS is
  protecting ITRUTH-VALUE .
  op _and_ : iBool iBool -> iBool [assoc comm prec 55] .
  op _or_ : iBool iBool -> iBool [assoc comm prec 59] .
  op _xor_ : iBool iBool -> iBool [assoc comm prec 57] .
  op not_ : iBool -> iBool [prec 53] .
  op _implies_ : iBool iBool -> iBool [gather (e E) prec 61] .
  vars A B C : iBool .
  eq true and A = A .
  eq false and A = false .
  eq A and A = A .
  eq false xor A = A .
  eq A xor A = false .
  eq A and (B xor C) = A and B xor A and C .
  eq not A = A xor true .
  eq A or B = A and B xor A xor B .
  eq A implies B = not(A xor A and B) .
endfm

fmod IBOOL is
  protecting IBOOL-OPS .
endfm

fmod IEXT-BOOL is
  protecting IBOOL .
  op _and-then_ : iBool iBool -> iBool [strat (1 0) gather (e E) prec 55] .
  op _or-else_ : iBool iBool -> iBool [strat (1 0) gather (e E) prec 59] .
  var B : [iBool] .
  eq true and-then B = B .
  eq false and-then B = false .
  eq true or-else B = true .
  eq false or-else B = B .
endfm

---- Natural numbers
fmod INAT is
  protecting ITRUTH-VALUE .
  sorts iZero iNzNat iNat .
  subsort iZero iNzNat < iNat .
  
  op 0 : -> iZero [ctor] .
  op s_ : iNat -> iNzNat [ctor] .
  
  --- equality enrichment
  op _~_ : iNat iNat -> iBool [comm] .
  eq 0 ~ 0 
   = true .
  eq s N:iNat ~ 0 
   = false .
  eq N:iNat ~ N:iNat
   = true .
  eq s N:iNat ~ s M:iNat
   = N:iNat ~ M:iNat .
endfm
\end{maude}

\subsection{smt.maude}

\begin{maude}
--- SMT simulation
fmod BOOLEAN is
  sort Boolean .
  op true : -> Boolean .
  op false : -> Boolean .
  
  op not_ : Boolean -> Boolean [prec 53] .
  op _and_ : Boolean Boolean -> Boolean [assoc comm gather (E e) prec 55] .
  op _xor_ : Boolean Boolean -> Boolean [assoc comm gather (E e) prec 57] .
  op _or_ : Boolean Boolean -> Boolean [assoc comm gather (E e) prec 59] .
  op _implies_ : Boolean Boolean -> Boolean [gather (e E) prec 61] .
  
  op _===_ : Boolean Boolean -> Boolean [gather (e E) prec 51] .
  op _=/==_ : Boolean Boolean -> Boolean [gather (e E) prec 51] .
  op _?_:_ : Boolean Boolean Boolean -> Boolean [gather (e e e) prec 71] .
endfm
  
fmod INTEGER is
  protecting BOOLEAN .
  sort Integer .
  op <Integers> : -> Integer .
  
  op -_ : Integer -> Integer .
  op _+_ : Integer Integer -> Integer [gather (E e) prec 33] .
  op _*_ : Integer Integer -> Integer [gather (E e) prec 31] .
  op _-_ : Integer Integer -> Integer [gather (E e) prec 33] .
  op _div_ : Integer Integer -> Integer [gather (E e) prec 31] .
  op _mod_ : Integer Integer -> Integer [gather (E e) prec 31] .
  
  op _<_ : Integer Integer -> Boolean [prec 37] .
  op _<=_ : Integer Integer -> Boolean [prec 37] .
  op _>_ : Integer Integer -> Boolean [prec 37] .
  op _>=_ : Integer Integer -> Boolean [prec 37] .
  
  op _===_ : Integer Integer -> Boolean [gather (e E) prec 51] .
  op _=/==_ : Integer Integer -> Boolean [gather (e E) prec 51] .
  op _?_:_ : Boolean Integer Integer -> Integer [gather (e e e) prec 71] .
  
  *** seems to break CVC4
  op _divisible_ : Integer Integer -> Boolean [prec 51] .
endfm
\end{maude}

\subsection{smt-util.maude}

\begin{maude}
load smt.maude
load prelude-short

fmod SMT-UTIL is
  inc INTEGER .
  pr IBOOL-OPS .
  
  op check-sat : Boolean -> iBool .
  op check-unsat : Boolean -> iBool .
  op entails : Boolean Boolean -> iBool .
  eq check-sat(B:Boolean)
   = false .
  eq check-unsat(B:Boolean)
   = not(check-sat(B:Boolean)) .
  eq entails(C1:Boolean, C2:Boolean)
   = check-unsat(C1:Boolean and not(C2:Boolean)) .
  
  --- some Boolean identities
  eq B:Boolean and true
   = B:Boolean .
  eq B:Boolean and false
   = false .
  eq B:Boolean or true
   = true .
  eq B:Boolean or false
   = B:Boolean .
  eq not((true).Boolean)
   = (false).Boolean .
  eq not((false).Boolean)
   = (true).Boolean .
endfm
\end{maude}

\subsection{agent-id.maude}

\begin{maude}
load smt-util.maude

--- agent identifier
fmod AGENT-ID is
  pr INAT .
  pr IEXT-BOOL .
  sort Aid .
  
  op root : -> Aid .
  op _._ : iNat Aid -> Aid .
  
  vars L L0 L1 L2 : Aid .
  vars N N0 N1 N2 : iNat .
  
  --- auxiliary operations
  op is-prefix? : Aid Aid -> iBool .
  eq is-prefix?(root, L)
   = true . 
  eq is-prefix?(N . L, root) 
   = false .
  eq is-prefix?(N0 . L0, N1 . L1) 
   = (N0 . L0 ~ N1 . L1) or-else is-prefix?(N0 . L0, L1) .
  
  --- equality enrichment
  op _~_ : Aid Aid -> iBool [comm] .
  eq root ~ root
   = true .
  eq root ~ N . L
   = false . 
  eq L ~ L
   = true .
  eq N . L ~ N0 . L0
   = (N ~ N0) and-then L ~ L0 .
endfm
\end{maude}

\subsection{sccp.maude}

\begin{maude}
load agent-id.maude

--- commands syntax
fmod SCCP-SYNTAX is
  pr INTEGER .
  pr AGENT-ID .
  sort SCCPCmd .
  
  op 0 : -> SCCPCmd . 
  op tell_ : Boolean -> SCCPCmd . 
  op ask_->_ : Boolean SCCPCmd -> SCCPCmd .
  op _||_ : SCCPCmd SCCPCmd -> SCCPCmd [assoc comm gather (e E) ] .
  op <_>[_] : iNat SCCPCmd -> SCCPCmd .
  op rec(_,_) : iNat SCCPCmd -> SCCPCmd .
  op xtr(_,_) : iNat SCCPCmd -> SCCPCmd .
  op v(_) : iNat -> SCCPCmd .
  
  vars C C0 C1 : SCCPCmd .
  
  --- auxiliary 
  op IF_THEN_ELSE_FI : iBool SCCPCmd SCCPCmd -> SCCPCmd [strat (1 0 0)] .
  eq IF true THEN C0 ELSE C1 FI 
   = C0 .
  eq IF false THEN C0 ELSE C1 FI 
   = C1 .
endfm

--- state syntax
fmod SCCP-STATE is
  pr SCCP-SYNTAX .
  
  sorts Cid Obj Cnf Sys .
  subsorts Obj < Cnf .
  ops store process : -> Cid .
  op [_,_,_] : Cid Aid Boolean -> Obj [ctor] .
  op [_,_,_] : Cid Aid SCCPCmd -> Obj [ctor] .
  op mt : -> Cnf [ctor] .
  op __ : Cnf Cnf -> Cnf [ctor assoc comm id: mt] .
  op {_} : Cnf -> Sys [ctor] .
  
  vars L L0 L1 : Aid .
  vars N N0 N1 : iNat .
  vars B B0 B1 : Boolean .
  vars C C0 C1 : SCCPCmd .
  vars X Y     : Cnf .
  
  --- auxiliary operations
  op replace : SCCPCmd iNat SCCPCmd -> SCCPCmd .
  eq replace( 0, N, C ) 
   = 0 . 
  eq replace( tell B, N, C ) 
   = tell B .
  eq replace( ask B -> C0, N, C ) 
   = ask B -> replace( C0, N, C ) .
  eq replace( C0 || C1, N, C )
   = replace( C0, N, C ) || replace( C1, N, C ) .
  eq replace( < N0 >[ C0 ], N, C )
   = < N0 >[ replace( C0, N, C ) ] .
  eq replace( rec( N0, C0 ), N, C ) 
   = rec( N0, C0 ) .
  eq replace( xtr( N0, C0 ), N, C )  
   = xtr( N0, replace( C0, N, C ) ) .
  eq replace( v(N0), N, C ) 
   = IF (N0 ~ N) THEN C ELSE v(N0) FI .
  
  op exists-store? : Cnf Aid -> iBool .
  eq exists-store?(mt, L)
   = false .
  eq exists-store?( [process, L0, C0] X, L)
   = exists-store?(X,L) .
  eq exists-store?( [store, L0, B0] X, L0)
   = true .
 ceq exists-store?( [store, L0, B0] X, L)
   = exists-store?(X, L)
  if L0 ~ L = false .
endfm

--- transitions
mod SCCP is
  inc SCCP-STATE .
  pr SMT-UTIL .
  
  vars N N0 N1 : iNat .
  vars L L0 L1 : Aid .
  vars B B0 B1 : Boolean .
  vars C C0 C1 : SCCPCmd .
  vars X : Cnf .
  
  --- non-observable concurrent transitions
  eq { [ process, L0, 0 ] X }
   = { mt X } .
  eq [ store, L0, B0 ] [ store, L0, B1 ]
   = [ store, L0, B0 and B1 ] .

  --- observable concurrent transitions
  rl [tell] :
     { [ store, L0, B0 ] [process, L0, tell B1 ] X }
  => { [ store, L0, B0 and B1 ] [ process, L0, 0 ] X } .

 crl [ask] :
     { [ store, L0, B0 ] [ process, L0, ask B1  -> C1 ] X }
  => { [ store, L0, B0 ] [ process, L0, C1 ] X } 
  if entails(B0, B1) = true .

  rl [parallel] :
     { [ process, L0, C0 || C1 ] X }
  => { [ process, L0, C0 ] [ process, L0, C1 ] X } .

  rl [space] :
     { [ store, L0, B0 ] [ process, L0, < N0 >[ C0 ] ] X } 
  => { [ store, L0, B0 ] [ process, L0, 0 ] 
  	   [ store, N0 . L0, true ] [ process, N0 . L0, C0 ]  X } .

  rl [recursion]:
     { [ process, L0, rec( N0, C0 ) ] X }
  => { [ process, L0, replace( C0, N0, rec( N0, C0 ) ) ] X } .

  rl [extrusion]:
     { [ process, N0 . L0, xtr( N0, C0 ) ] X }
  => { [ process, N0 . L0, 0 ] [ process, L0, C0 ] X } .
endm
\end{maude}

\end{document}